% =========================================================================
\documentclass[11pt]{article} 
\usepackage[text={6.5in,9in}]{geometry}
\usepackage{amsmath,latexsym,amssymb,color,amsthm}
\usepackage{ifthen,graphics,epsfig}
\bibliographystyle{plain}
\usepackage{color}
\usepackage{xspace}
\usepackage{enumitem}
\usepackage{url}
\usepackage{amssymb}% http://ctan.org/pkg/amssymb
\usepackage{pifont}% http://ctan.org/pkg/pifont
\usepackage{authblk}
\usepackage{booktabs}
\usepackage{wrapfig}
\usepackage[colorlinks=true,citecolor=gray]{hyperref}
\usepackage{caption}
\usepackage{subcaption}
\captionsetup[subfigure]{labelformat=simple}

\usepackage[table]{xcolor}

%==========================================================================
\usepackage{float}
\newfloat{algorithm}{thp}{lop}
\floatname{algorithm}{Algorithm}
% =========================================================================
\bibliographystyle{plain}
% =========================================================================
\newcommand{\Xomit}[1]{}
%%=========================================================================
%% =========================================================================

%% comments
\usepackage{marginnote}
\usepackage[table]{xcolor}
\usepackage{todonotes}
\usepackage{menukeys}

%% end comments

\newcommand{\remove}[1]{}

%==========================================================================

%-----------------------for square-----------------------------------------
\newlength {\squarewidth}

% =======================================================================

% =========================================================================
%-------- macros for algorithm ---------------------------------------

\newtheorem{theorem}{Theorem}
\newtheorem{lemma}{Lemma}

\newcommand{\toto}{xxx}

\newcounter{linecounter}
\newcommand{\linenumbering}{\ifthenelse{\value{linecounter}<10}{(0\arabic{linecounter})}{(\arabic{linecounter})}}
\renewcommand{\line}[1]{\refstepcounter{linecounter}\label{#1}\linenumbering}
\newcommand{\resetline}[1]{\setcounter{linecounter}{0}#1}
\renewcommand{\thelinecounter}{\ifnum \value{linecounter} > 9\else 0\fi \arabic{linecounter}}

% ============================================================================
%\newcommand{\signed}[2][i][m]{\theta_{#1,#2}}

%\newcommand{\signedset}[2]{\big\{ #2, \theta_{#1} \big\}}
\newcommand{\aux}[2]{\textsc{aux}[#1](#2)}
\newcommand{\auxset}[2]{\textsc{auxset}[#1](#2)}
\newcommand{\auxsc}[1]{\textsc{aux}(#1)}

\newcommand{\supportcoin}{\mathit{support\_coin}}
\newcommand{\msg}{\textsc{tag}}
\newcommand{\stage}{\textsc{stage}}
\newcommand{\bval}{\textsc{b\_val}}
\newcommand{\sval}{\textsc{s\_val}}
\newcommand{\est}{\textsc{est}}

\newcommand{\shouldbroadcast}{\mathit{should\_broadcast}}

\newcommand{\propose}[1]{\textsc{propose}(#1)}

\newcommand{\send}{\mathit{\sf send}}
\newcommand{\sto}{\mathit{\sf to}}
\newcommand{\receive}{\mathit{\sf receive}}
\newcommand{\broadcast}{\mathit{\sf broadcast}}

\newcommand{\ttrue}{\mathit{\tt true}}
\newcommand{\tfalse}{\mathit{\tt false}}

\newcommand{\wait}{{\sf wait\_until}}

\newcommand{\bvbroadcast}{{\sf BV\_broadcast}}
\newcommand{\sbroadcast}{{\sf S\_broadcast}}
\newcommand{\sbvbroadcast}{{\sf SBV\_broadcast}}

\newcommand{\BAMP}{{\cal BAMP}_{n,t}}

\newcommand{\binvalues}{\mathit{bin\_values}}
\newcommand{\svalue}{\mathit{s\_value}}
\newcommand{\binptr}{\mathit{bin\_ptr}}
\newcommand{\view}{\mathit{view}}

\newcommand{\decide}{{\sf decide}}

\title{Two More Algorithms for Randomized Signature-Free Asynchronous Binary Byzantine Consensus with $t < n/3$ and $O(n^2)$ Messages and
  $O(1)$ Round Expected Termination}

\author[1]{Tyler Crain}%\thanks{}}
\affil[1]{tcrainwork@gmail.com}
%\date{}

\begin{document}

\maketitle

\begin{abstract}
  This work describes two randomized, asynchronous, round based, Binary Byzantine faulty tolerant
  consensus algorithms based on the algorithms of~\cite{MMR14} and \cite{MMR15}.
  Like the algorithms of~\cite{MMR14} and \cite{MMR15} they do not use signatures, use $O(n^2)$ messages per round
  (where each message is composed of a round number and a constant number of bits),
  tolerate up to one third failures, and have expected termination in constant number of rounds.

  The first, like~\cite{MMR15}, uses a weak common coin (i.e. one that can return different
  values at different processes with a constant probability) to ensure termination.
  The algorithm consists of $5$ to $7$ message broadcasts per round.
  An optimization is described that reduces this to $4$ to $5$ broadcasts per round
  for rounds following the first round.
  Comparatively,~\cite{MMR15} consists of $8$ to $12$ message broadcasts per round.
  
  The second algorithm, like~\cite{MMR14}, uses a strong common coin
  (i.e. one that returns the same value at all non-faulty processes)
  for both termination and correctness.
  Unlike~\cite{MMR14}, it does not require a fair scheduler to ensure termination.
  Furthermore, the algorithm consists of $2$ to $3$ message broadcasts for the first round
  and $1$ to $2$ broadcasts for the following rounds,
  while~\cite{MMR15} consists of $2$ to $3$ broadcasts per round.  
\end{abstract}

  % Random: A03, BO83, B87, BT83, CR93, FP90, KS16, MMR14, PCR14, R83, T84

  % Proofs: FL82 (t+1)

  % Partial sync: DDS87, DLS88

  % Normal: BSA14,

  % Vector: NCV05

  % Faults: LVCQV16, MA06, PSL80

  % Byz: LSP82

  % Reduction: MR17, MRT00, TC84, ZC09

  % Sync: FM97

  % Inspiration: CKS05, CGLR18, MMR14

  % PBFT: BSA14, CL02

% Thrsh: CH89, D88, DF90, R98

\section{Introduction and related work.}
Binary byzantine consensus concerns the problem of getting a set of distinct processes
distributed across a network to agree on a single binary value $0$ or $1$ where processes
can fail in arbitrary ways.
It is well known that this problem is impossible in an asynchronous network with at least
one faulty process~\cite{FLP85}. To get around this, algorithms can employ
randomization~\cite{A03, BO83, B87, BT83, CR93, FP90, KS16, MMR14, MMR15, PCR14, R83, T84},
or rely on an additional synchrony assumption~\cite{DDS87, DLS88}.
Randomized algorithms largely rely on the existence of a local or common random coin.
The output of local coin is only visible to an individual process, while the output of a common
coin is visible to all processes, but only once a threshold of processes have participated in computing the coin.
A strong common coin is one that outputs the same value at all processes while a weak one
may output different values at different processes with a fixed probability~\cite{R83}.

This work presents two algorithms that do not use signatures, use $O(n^2)$ messages per round
(where each message is composed of a round number and a constant number of bits),
tolerate up to one third failures (a well know lower bound~\cite{LSP82}),
and have expected termination in a constant number of rounds.

The first algorithm uses a weak common coin and is based on the BV-Broadcast and
SBV-Broadcast abstractions of~\cite{MMR15}.
It consists of two invocations of SBV-Broadcast per round with an additional normal
message broadcast in between.
In the algorithm processes perform between $5$ and $7$ message broadcasts per round.
An optimization is described that reduces this to $5$ to $6$ broadcasts for the first round and
$4$ to $5$ broadcasts for following rounds.
Comparatively, the algorithm in~\cite{MMR15} that solves Byzantine Binary Consensus with
the same guarantees and uses between $8$ and $12$ message broadcasts per round.

The second algorithm uses a strong common coin which allows a reduction of the number of messages
processes broadcast per round to between $2$ to $3$ for the first round and $1$ to $2$ for all following rounds.
In each round processes either decide the same value as the output of the common coin for that round
or continue to the next round without deciding.
It is based on the algorithm of~\cite{MMR14} that solves Byzantine Binary Consensus with
the same guarantees, but requires a fair scheduler to ensure termination and uses
between $2$ and $3$ message broadcasts per round.
The algorithms of~\cite{C220,C20,CGLR18} all follow simmilar designs except with different
assumptions, where~\cite{C20,C220} uses signatures with~\cite{C220} relying on a
synchrony assumption for terminaton, and~\cite{CGLR18} does not use signatures but also
relies on a synchrony assumption for termination.
Note that~\cite{M18} presents a modification of~\cite{MMR14} that uses an additional message
broadcast to remove the fair scheduler requirement,
bringing up the number of broadcasts per round to between $3$ and $4$.

Note that one of the notable properties of the algorithms presented in this work, as well
as those in~\cite{MMR14} and~\cite{MMR15} is that they do not use signatures within
the description of the consensus.
Although it is common knowledge that common coins are implemented using signatures,
such as the efficient coin of~\cite{CKS05},
and that encrypted communication is often setup using signatures, such as with TLS,
signatures do have disadvantages such as that they require expensive computations and increase message
size by a non-insignificant number of bytes.
Therefore, while in practice these algorithms will still require the use of signatures,
not using signatures within the consensus remains beneficial.
Still, many asynchronous binary consensus algorithms exist that use signatures
and signatures have many advantages.
Apart from the advantages provided directly by signatures themselves such as irrefutable,
algorithms using them can be less complex, for example in the algorithm of~\cite{CKS05}, which uses
a weak common coin, processes perform $2$ broadcasts per round and in the algorithm of~\cite{C20}, which
uses a strong common coin, processes broadcast $1$ message per round.
Given the different tradeoffs of using signatures or not there are many randomized algorithms
that use them~\cite{C20, R83, T84, CR93, CKS05} and many that do not~\cite{B87, BG93, C20, FMR07, ST87}.

% The algorithm presented in this paper uses a common coin, assumes at most one third of the processes
% are faulty , and terminates in expected $O(1)$ number of message delays.
% While there are many algorithms that solve this problem with the same guarantees~\cite{A03,CKS05,MR17},
% this algorithm focuses on simplicity and efficiency.
% Namely, it starts with each process broadcasting an initial proposal, then executing a series
% of rounds that consist of two all to all message broadcasts.
% The first being to distribute processes current binary estimates, and the second being used to compute
% the output of the common coin.

% The design of the algorithm is primarily based on two previous algorithms;~\cite{CKS05} and~\cite{MMR14}.
% While these algorithms provide similar theoretical guarantees, they are slightly more complex/costly.
% In this paper, like in~\cite{CKS05}, threshold signatures~\cite{CH89, D88, DF90, R98} are used to
% implement the common coin, and a set of cryptographic signatures are included with each message proving
% its validity.
% The algorithm presented here differs in that it requires one less all to all message broadcast per round.
% Similar to the randomized algorithm of~\cite{MMR14} this work relies on a common coin for correctness.
% Differently,~\cite{MMR14} does not include cryptographic signatures with each message,
% but requires up to $2$ additional message broadcasts per round and furthermore is not fully asynchronous
% as it requires a fair scheduler to ensure termination in all cases~\cite{MMR15}.

While the binary consensus problem only allows process to agree on a single binary value,
there exist many reductions to multi-value consensus~\cite{MR17, MRT00, TC84, ZC09} allowing processes to agree on arbitrary values.
Furthermore many algorithms~\cite{BSA14, CL02} exists that solve multi-value consensus directly
through the use of types of synchrony assumptions to ensure termination.
Additionally, algorithms exists that make many different assumptions about the model
such as synchrony~\cite{FM97}, different fault models~\cite{LVCQV16, MA06, PSL80},
solve different definitions of consensus~\cite{NCV05}, and so on.

\section{A Byzantine Computation Model.}
\label{sec:model}

This section describes the assumed computation model.
%\subsection{Base computation model}

\paragraph{Asynchronous processes.}
%\subsection{Asynchronous processes}
The system is made up of a set $\Pi$ of $n$ asynchronous sequential processes,
namely $\Pi = \{p_1,\ldots,p_n\}$; $i$ is called the ``index'' of $p_i$. 
``Asynchronous'' means that each process proceeds at its own speed,
which can vary with time and remains unknown to the other processes.
``Sequential'' means that a process executes one step at a time.
This does not prevent it from executing several threads with an appropriate
multiplexing. 
%
%As local processing time are negligible with respect to message transfer
%delays, they are considered as being equal to zero. 
%(We show how to relax 
%this assumption in Section~\ref{ssec:catchup}.) 
Both notations
$i\in Y$ and $p_i\in Y$ are used to say that $p_i$ belongs to the set $Y$.
%\vspace{0.5em}

%{\bf Communication network.}
\paragraph{Communication network.}
%\subsection{Communication network}
\label{sec:basic-comm-operations}
The processes communicate by exchanging messages through
an asynchronous reliable point-to-point network. ``Asynchronous''  means that
there is no bound on message transfer delays, but these delays are finite.
``Reliable'' means that the network does not lose, duplicate, modify, or
create messages. ``Point-to-point'' means that any pair of processes
is connected by a bidirectional channel.
% Hence, when a process receives
% a message, it can identify its sender.
%
A process $p_i$ sends a message to a process $p_j$ by invoking the primitive 
``$\send$ {\sc tag}$(m)$ $\sto~p_j$'', where {\sc tag} is the type
of the message and $m$ its content. To simplify the presentation, it is
assumed that a process can send messages to itself. A process $p_i$ receives 
a message by executing the primitive ``$\receive()$''.
The macro-operation $\broadcast$ {\sc tag}$(m)$ is  used as a shortcut for
``{\bf for each} $p_i \in \Pi$  {\bf do} $\send$ {\sc tag}$(m)$ $\sto~p_j$
{\bf end for}''. 
%\vspace{0.5em}

%\vspace{0.5em}
%\noindent
%{\bf Failure model.}
\paragraph{Failure model.}
%\subsection{Failure model}
Up to $t$ processes can exhibit a {\it Byzantine} behavior~\cite{PSL80}.
 A Byzantine process is a process that behaves
arbitrarily: it can crash, fail to send or receive messages, send
arbitrary messages, start in an arbitrary state, perform arbitrary state
transitions, etc. Moreover, Byzantine processes can collude 
to ``pollute'' the computation (e.g., by sending  messages with the same 
content, while they should send messages with distinct content if 
they were non-faulty). 
A process that exhibits a Byzantine behavior is called {\it faulty}.
Otherwise, it is {\it non-faulty}.  
%
%Let us notice that, %as each pair of processes is connected by a channel,
%as unforgeable signatures are used
%no Byzantine process can impersonate another process.   
Moreover, it is assumed that the Byzantine processes do not
fully control the network in that they can not corrupt the messages sent by 
non-faulty  processes.
Byzantine processes can control the network by modifying
the order in which messages are received, but they cannot
postpone forever message receptions.  
%\vspace{0.5em}

%\noindent
%{\bf Additional synchrony assumption.}

\paragraph{A Common Coin.}
The model is enriched with the same \emph{common coin} (CC) as in~\cite{MMR15}
that was originally defined in~\cite{R83}.
The common coin outputs a binary value at each non-faulty process
for each round.
All non-faulty processes output $0$ in round $r$ with probability $1/d$
and output $1$ in round $r$ with probability $1/d$.
Non-faulty processes output different values in round $r$ with probability $(d-2)/d$,
where $d \geq 2$ is a known constant.
The output of the coin is revealed by calling a function {\sf random}() provided by
a random oracle.
The output of the coin is unpredictable and random and its output is only revealed
for a round $r$ once at least one non-faulty process has called {\sf random}() in that round
(i.e. faulty processes cannot compute the output of the coin entirely themselves).

\paragraph{A Strong ($t+1$) Common Coin.}
A \emph{strong common coin} (SCC) is defined as a common coin that has $d = 2$ meaning that in every round
all non-faulty processes receive the same output from the common coin.
Furthermore $t+1$ means the output of the coin for round $r$ is not revealed until
at least $t+1$ non-faulty processes have called called {\sf random}() in that round.

\paragraph{Notations.}

\begin{itemize}
\item The acronym ${\BAMP}[\emptyset]$ is used to denote the 
  basic Byzantine Asynchronous Message-Passing computation model;
  $\emptyset$ means that there is no additional assumption. 
\item The basic computation model strengthened with the additional constraint $t<n/3$
  is denoted ${\BAMP}[t<n/3]$.
\item ${\BAMP}[t<n/3]$ enriched with the common coin is denoted ${\BAMP}[t<n/3,CC]$.
\item ${\BAMP}[t<n/3]$ enriched with the strong ($t+1$) common coin is denoted ${\BAMP}[t<n/3,SCC]$.
% \item ${\BAMP}[t<n/3,SCC]$ enriched a fair scheduler is denoted ${\BAMP}[t<n/3,SCC,FS]$.
\end{itemize}

%========================================================================

Before presenting the algorithms, the BV-Broadcast and SBV-Broadcast abstractions
from~\cite{MMR15} are recalled.
Note that within the consensus algorithms multiple instances of these abstractions may be used,
so to differentiate between these instances they are called with unique tags
(denoted as $\msg$ in the presentation of the abstractions).

\subsection{The BV-Broadcast abstraction from~\cite{MMR15} in ${\BAMP}[t<n/3]$}

% ===========================================================================
\begin{figure*}[ht!]
\centering{
\fbox{
\begin{minipage}[t]{150mm}
\footnotesize
\renewcommand{\baselinestretch}{2.5}
\resetline
\begin{tabbing}
aaaA\=aaA\=aaaA\=aaaaaaaaaA\kill

{\bf opera}\={\bf tion} ${\bvbroadcast}$ $\msg(v_i)$ {\bf is}\\

\line{BV-01} \> $\binvalues_i \leftarrow \emptyset$ \\

\line{BV-02} \> $\broadcast$ $\bval(v_i)$ \\

\line{BV-03} \> {\bf return} $\binvalues_i$
{\it \scriptsize  \hfill \color{gray}{// $\binvalues_i$ has not necessarily obtained its final value when returned}} \\~\\

{\bf when} $\bval(v)$ is received \\

\line{BV-04} \> {\bf if} ($\bval(v)$ received from $(t+1)$ different processes and $\bval(v)$ not yet broadcast) \\

\line{BV-05} \>\> {\bf then} $\broadcast$ $\bval(v)$
{\it \scriptsize  \hfill \color{gray}{// a process echos a value only once}} \\

\line{BV-06} \> {\bf end if}; \\

\line{BV-07} \> {\bf if} ($\bval(v)$ received from $(2t+1)$ different processes) \\

\line{BV-08} \>\> {\bf then} $\binvalues_i \leftarrow \binvalues_i \cup \{v\}$
{\it \scriptsize  \hfill \color{gray}{// local delivery of a value}} \\

\line{BV-09} \> {\bf end if}.

\end{tabbing}
\normalsize
\end{minipage}
}
\caption{ An algorithm implementing BV-broadcast in ${\BAMP}[t<n/3]$ from~\cite{MMR15}.}
\label{algo-BV} 
\vspace{-1em}
}
\end{figure*}
%-----------------------------------------------------------------------
For each instance of BV-Broadcast, each non-faulty process $p_i$ calls the abstraction with
a unique tag $\msg$ for that instance and a binary value as input.
It returns a set of binary values $\binvalues$, which has not necessarily achieved
its final state when returned (i.e. the implementation of the abstraction may add items later).
The abstraction ensures the following properties:

\begin{itemize}
\item BV-Termination. The invocation of $\bvbroadcast()$ by a non-faulty process terminates.
\item BV-Justification. If $p_i$ is non-faulty and $v \in \binvalues_i$, then $v$ has been BV-Broadcast
  by a non-faulty process.
\item BV-Uniformity. If a value $v$ is added to the set $\binvalues_i$ of a non-faulty process $p_i$,
  eventually $v \in \binvalues_j$ at every non-faulty process $p_j$.
\item BV-Obligation. Eventually the set $\binvalues_i$ of a non-faulty process is non empty.
\item BV-Single-value. If all non-faulty processes BV-Broadcast the same value $v$, $v$ is eventually
  added to the set $\binvalues_i$ of each non-faulty process $p_i$.
\end{itemize}

Note that the values input by non-faulty processes for a specific instance need not be from the set $\{0,1\}$,
they can be any values as long as the size of the set of values input by non-faulty processes is between $1$ and $2$.
An algorithm implementing the BV-Broadcast abstraction is presented in Figure~\ref{algo-BV}.
The reader is referred to~\cite{MMR15} for proofs and a description of the code.

\subsection{The SBV-Broadcast abstraction from~\cite{MMR15} in ${\BAMP}[t<n/3]$}

% ===========================================================================
\begin{figure*}[ht!]
\centering{
\fbox{
\begin{minipage}[t]{150mm}
\footnotesize
\renewcommand{\baselinestretch}{2.5}
\resetline
\begin{tabbing}
aaaA\=aaA\=aaaA\=aaaaaaaaaA\kill

{\bf opera}\={\bf tion} $\sbvbroadcast$ $\msg(v_i)$ {\bf is}\\

\line{SBV-01} \> $\binvalues_i \leftarrow \bvbroadcast$ $\msg(v_i)$; \\

\line{SBV-02} \> \wait ($\binvalues_i \neq \emptyset$) \\
{\it \scriptsize  \hfill \color{gray}{// $\binvalues_i$ has not necessarily obtained its final value when the wait terminates}} \\

\line{SBV-03} \> $\broadcast$ $\auxsc{w}$ where $w \in \binvalues_i$ \\

\line{SBV-04} \> \wait ($\exists$ a set $\view_i$ such that its values (i) belong to $\binvalues_i$ and  \\
\>\>   $~~~~~~~$ (ii) come from messages $\auxsc$ received from $(n-t)$ distinct processes); \\

\line{SBV-05} \> {\bf return}($\view_i, \binvalues_i$)

% \line{BBC-01} \> \\

% \line{BV-01} \> $\broadcast$ $\bval(v_i)$ \\

\end{tabbing}
\normalsize
\end{minipage}
}
\caption{ An algorithm implementing SBV-Broadcast in ${\BAMP}[t<n/3]$ from~\cite{MMR15}.}
\label{algo-SBV} 
\vspace{-1em}
}
\end{figure*}
%-----------------------------------------------------------------------

For each instance of SBV-Broadcast, each non-faulty process $p_i$ calls the abstraction with
a unique tag $\msg$ for that instance and a binary value as input.
It returns two sets of binary values, the first being $\view$ and the second being
$\binvalues$.
While the set $\view$ has achieved its final state when returned, $\binvalues$ may not have
(i.e. the implementation of the abstraction may add items later).
The abstraction ensures the following properties:
(Note that the original SBV-Broadcast from~\cite{MMR15} only returned a single set $\view$,
but the algorithms presented in this work will need an additional property ensured
by $\binvalues$.
Also note that this set already exists in the original implementation,
here it is simply returned.)

\begin{itemize}

\item SBV-Termination. The invocation of $\sbvbroadcast$ $\msg()$ by a non-faulty process terminates.

\item SBV-Obligation. The set $\view_i$ returned by a non-faulty process $p_i$ is not empty.

\item SBV-Justification. If $p_i$ is non-faulty and $v \in \view_i$ then a non-faulty
  process called $\sbvbroadcast$ $v$.

\item SBV-Inclusion. If $p_i$ and $p_j$ are non-faulty processes and $\view_i = \{v\}$
  then $v \in \view_j$.

\item SBV-Uniformity. If all non-faulty processes $\sbvbroadcast$ the same value $v$, then
  $\view_i= \{v\}$ at every non-faulty process $p_i$.

\item SBV-Singleton. If $p_i$ and $p_j$ are non-faulty, $[(\view_i=\{v\}) \wedge (view_j = \{w\})] \implies (v = w)$

\end{itemize}

In this work the following additional property is introduced that was not originally included in~\cite{MMR15}:

\begin{itemize}

\item SBV-Binvalues. The set $\binvalues_i$ returned by a non-faulty process $p_i$
  satisfies the BV-Broadcast abstraction where non-faulty processes input the same values
  to this abstraction as they did to the SBV-Broadcast abstraction.
  Furthermore the set $\binvalues_i$ at a
  non-faulty process $p_i$ eventually contains every value
  returned in $view_j$ at every non-faulty process $p_j$.

\end{itemize}

An algorithm implementing the SBV-Broadcast abstraction is presented in Figure~\ref{algo-SBV}.
The reader is referred to~\cite{MMR15} for full proofs and a description of the code.
Here the SBV-Binvalues is proved as it is added in this work.

\begin{lemma}
  The algorithm of Figure~\ref{algo-SBV} implements the SBV-Broadcast abstraction in ${\BAMP}[t<n/3]$.
  \label{lem:sbvbroadcast}
\end{lemma}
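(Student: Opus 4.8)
The plan is to reuse the correctness proof of~\cite{MMR15} for the six properties that are unchanged from the original formulation—SBV-Termination, SBV-Obligation, SBV-Justification, SBV-Inclusion, SBV-Uniformity, and SBV-Singleton—and to give a direct argument only for the newly added SBV-Binvalues property. I would handle SBV-Binvalues in its two separate clauses: (1) that $\binvalues_i$ satisfies the BV-Broadcast abstraction with the SBV inputs, and (2) that $\binvalues_i$ eventually contains every value returned in $\view_j$ at every non-faulty $p_j$.

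For the first clause, I would observe that the set $\binvalues_i$ returned at line~SBV-05 is exactly the object produced by the internal invocation $\bvbroadcast$ $\msg(v_i)$ at line~SBV-01, and that the value $v_i$ fed to this BV-Broadcast is precisely the value $p_i$ supplied to SBV-Broadcast. Applying the correctness of the implementation in Figure~\ref{algo-BV} (from~\cite{MMR15}) to the family of these internal calls, the returned set immediately inherits BV-Termination, BV-Justification, BV-Uniformity, BV-Obligation, and BV-Single-value, with the non-faulty inputs being exactly the non-faulty SBV inputs. The one point to flag, as already noted for Figure~\ref{algo-BV}, is that $\binvalues_i$ is a live reference that the BV handler keeps updating after SBV-Broadcast returns, so these guarantees—several of which are eventual—continue to hold on the returned object.

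For the second clause, I would fix a non-faulty process $p_j$ and a value $v \in \view_j$. By condition (i) of the wait predicate at line~SBV-04, every value of $\view_j$ belongs to $\binvalues_j$, so $v \in \binvalues_j$. Since the first clause establishes that $\binvalues_j$ is a BV-Broadcast set and $p_j$ is non-faulty, BV-Uniformity guarantees that $v$ is eventually added to $\binvalues_i$ at every non-faulty process $p_i$. Quantifying over all such $v$ and all non-faulty $p_j$ then yields the claim.

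The hard part is conceptual rather than computational: one must recognize that the $\binvalues$ returned by SBV-Broadcast is the same continuously-updated object as the output of the underlying BV-Broadcast, so that the eventual (liveness) guarantee BV-Uniformity—which may only take effect after SBV-Broadcast has already returned at a given process—is what carries the second clause. Once this identification is made, both clauses reduce to direct applications of the BV-Broadcast properties, and in particular no fresh counting argument over the $(n-t)$ supporting \textsc{aux} messages at line~SBV-04 is required.
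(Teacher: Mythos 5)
Your proof is correct and follows essentially the same route as the paper: the six original properties are deferred to the cited work, and SBV-Binvalues is obtained by identifying the returned $\binvalues_i$ with the (still-updating) output of the internal BV-Broadcast call, then combining the wait condition's requirement that $\view_j \subseteq \binvalues_j$ with BV-Uniformity. Your write-up is in fact somewhat more careful than the paper's own, which compresses the same two steps into two sentences.
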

\begin{proof}
  All properties other than SBV-Binvalues are proved in~\cite{MMR15}.
  Proof of the SBV-Binvalues property: By line~\ref{SBV-04} the set $\view_i$ returned on
  line~\ref{SBV-04} is the set $\binvalues_i$ returned by the call
  to $\bvbroadcast$ which satisfies the BV-Broadcast abstraction.
  By BV-Uniformity all processes will eventually have the same set
  of values in their set $\binvalues_i$ which is the set returned by SBV-Broadcast,
  the proof then follows.
\end{proof}

\section{Binary Byzantine Consensus.}
\label{sec:byz-consensus}

\subsection{The Binary Consensus Problem.}

%As in all message-passing consensus algorithms, it is assumed 
%(in both the multivalued and binary consensus algorithms
%presented below) that all non-faulty processes propose a value.
%but we 
%make the additional assumption that this value is 
%valid.\foonote{As our definition of consensus}
%------------------------------------------------------------------------
%\subsection{Binary Byzantine consensus}

%
%The implementation of multivalued VPBC relies on an underlying binary
%Byzantine consensus (denoted BBC). 
%A leader-free, randomization-free
%and signature-free implementation of BBC will be described in
%Section~\ref{sec:bin-in-eventually-synchronous}.

In the binary consensus problem processes input a value to the algorithm, called their \emph{proposal},
run an algorithm consisting of several rounds,
and eventually output a binary value called their \emph{decision}.
Let $\cal V$ be the set of values that can be proposed.
While  $\cal V$ can contain any number ($\geq 2$) of values
in multi-valued consensus, it contains only two values in binary consensus, 
e.g., ${\cal V} =\{0,1\}$.
Assuming that
each non-faulty process proposes a value, the binary Byzantine consensus (BBC) problem is for 
each of them to
decide on a value in such a way that the following properties are
satisfied:
\begin{itemize}%[noitemsep]
\item BBC-Termination. Every non-faulty process eventually decides on a value.
\item BBC-Agreement.   No two non-faulty processes decide on different values.
\item BBC-Validity.  If all non-faulty processes propose the same value, no
other value can be decided.
\end{itemize}

%\noindent
%{\bf Notations.}

% \subsection{A Safe and Live Consensus Algorithm in the ${\BAMP}[t<n/3]$ Model.}
\subsection{A Safe and Live Binary Byzantine Consensus Algorithm in ${\BAMP}[t<n/3,CC]$.}
%\subsection{Safe and live consensus algorithm under eventual synchrony}
\label{ssec:live-bbc}

This section presents a Binary Byzantine Consensus algorithm using
SBV-Broadcast in addition to a weak common coin.

\paragraph{Message types.}
The following message types are used by the consensus (in addition to those used by SBV-Broadcast).
\begin{itemize}
\item $\auxset{r}{s}$. An {\sc auxset} message contains a round number $r$ and a set of binary values $s$.
\end{itemize}

\paragraph{Local variables.}
The following local variables are used at each process.
\begin{itemize}
\item $r_i$. The current round at process $i$.
\item $\view_i[]$. A map of sets of binary values at process $i$ indexed by a round and an integer $0-2$.
\item $\binvalues_i[]$. A map of sets of binary values at process $i$ indexed by a round.
\item $est_i$. The current binary estimate at process $i$.
\end{itemize}

% % ===========================================================================
% \begin{figure*}[ht!]
% \centering{
% \fbox{
% \begin{minipage}[t]{150mm}
% \footnotesize
% \renewcommand{\baselinestretch}{2.5}
% \resetline
% \begin{tabbing}
% aaaA\=aaA\=aaaA\=aaaaaaaaaA\kill

% {\bf opera}\={\bf tion} $\dsbvbroadcast$ $\msg(v_i)$ {\bf is}\\

% \line{BBC-01} \> $(\view_i[0], \_) \leftarrow \sbvbroadcast$ $\stage[0](v_i)$; \\

% \line{BBC-01} \> {\bf if}($\view_i[0] = \{v\}$) {\bf then} $aux_i \leftarrow v$ {\bf else} $aux_i \leftarrow \bot$ {\bf end if}; \\

% \line{BBC-01} \> $(\view_i[1], \binvalues_i) \leftarrow \sbvbroadcast$ $\stage[1](aux_i)$; \\
% {\it \scriptsize  \hfill \color{gray}{// $\binvalues_i$ has not necessarily obtained its final value}} \\

% \line{BBC-01} \> {\bf return} $(\view_i[1], \binvalues_i)$.

% \end{tabbing}
% \normalsize
% \end{minipage}
% }
% \caption{An algorithm implementing DSBV-Broadcast in ${\BAMP}[t<n/3]$ from~\cite{MMR15}.}
% \label{algo-BBC} 
% \vspace{-1em}
% }
% \end{figure*}
% %-----------------------------------------------------------------------

% ===========================================================================
\begin{figure*}[ht!]
\centering{
\fbox{
\begin{minipage}[t]{150mm}
\footnotesize
\renewcommand{\baselinestretch}{2.5}
\resetline
\begin{tabbing}
aaaA\=aaA\=aaaA\=aaaaaaaaaA\kill

{\bf opera}\={\bf tion} $\propose{v_i}$ {\bf is}\\

\line{BBC3-01} \> $est_i \leftarrow v_i; r_i \leftarrow 0$; \\

{\bf repeat forever} \\

\line{BBC3-02} \> $r_i \leftarrow r_i + 1$; \\

\line{BBC3-03} \> $(\view_i[r_i,0], \binvalues_i[r_i]) \leftarrow \sbvbroadcast$ $\stage[r_i,0](est_i)$; \\
{\it \scriptsize  \hfill \color{gray}{// $\binvalues_i$ has not necessarily obtained its final value}} \\

\line{BBC3-04} \> $\broadcast$ $\auxset{r_i}{\view_i[r_i,0]}$
{\it \scriptsize  \hfill \color{gray}{// broadcast a set of binary values}} \\

\line{BBC3-05} \> \wait ($\exists$ a set $\view_i[r_i,1]$ such that its values (i) belong to $\binvalues_i[r_i]$ and  \\
\>\>   $~~~~~~~$ (ii) come from messages $\auxset{r_i}{}$ received from $(n-t)$ distinct processes); \\

\line{BBC3-06} \> {\bf if} ($view_i[r_i,1] = \{w\}$) \\

\line{BBC3-07} \>\> {\bf then} $est_i \leftarrow w$ \\

\line{BBC3-08} \>\> {\bf else} $est_i \leftarrow \bot$ \\

\line{BBC3-09} \> {\bf end if} \\

\line{BBC3-10} \> $(\view_i[r_i,2], \_) \leftarrow \sbvbroadcast$ $\stage[r_i,1](est_i)$; \\

\line{BBC3-11} \> $s \leftarrow$ {\sf random}(); \\

\line{BBC3-12} \> {\bf case} ($\view_i[r_i,2] = \{v\} \wedge v \neq \bot$) {\bf then} $est_i \leftarrow v; \decide(v)$ if not yet done \\

\line{BBC3-13} \>\> $~~~$ ($\view_i[r_i,2] = \{v, \bot\}$) $~~~~~~$ {\bf then} $est_i \leftarrow v$ \\

\line{BBC3-14} \>\> $~~~$ ($\view_i[r_i,2] = \{\bot\}$) $~~~~~~~~~$ {\bf then} $est_i \leftarrow s$ \\

\line{BBC3-15} \> {\bf end case} \\

{\bf end repeat}.

\end{tabbing}
\normalsize
\end{minipage}
}
\caption{ An algorithm implementing binary consensus in ${\BAMP}[t<n/3,CC]$.}
\label{algo-BBC} 
\vspace{-1em}
}
\end{figure*}
% -----------------------------------------------------------------------

\subsubsection{Algorithm Description.}
Figure~\ref{algo-BBC} presents the algorithm.
Non-faulty process call $\propose$ with an initial binary proposal.
Line~\ref{BBC3-01} initializes the processes' estimate to its proposal and the round to $0$.
Non-faulty processes then repeat lines~\ref{BBC3-02}-\ref{BBC3-15} for each round.

First the round number is incremented on line~\ref{BBC3-02}.
Non-faulty processes then call $\sbroadcast$ with tag $\stage[r_i,0]$ and input $est_i$.
By SBV-Singleton this call will output $\view_i[r_i,0]$ with a single unique value $v$
or both binary values at all non-faulty processes.

The idea behind this SBV-Broadcast is that (i) if $\view_i[r_i,0] = \{v\}$ at least $1$ non-faulty process
then the remaining code if the round will ensure only $v$ may be decided
and additionally, if the output of the coin is also $v$
% Otherwise if the output of the coin is also
at all all non-faulty processes, then they will
set their estimate to $v$ and decide by the following round.

Otherwise, (ii) if $\view_i[r_i,0] = \{0,1\}$ at non-faulty processes before the value
of the coin is revealed then lines~\ref{BBC3-04}-\ref{BBC3-05} are important to
help non-faulty processes reach a decision.
On line~\ref{BBC3-04}, non-faulty processes broadcast $\auxset{r_i}{\view_i[r_i,0]}$
and on line~\ref{BBC3-05} wait for ($n-t$) of these messages from distinct processes where the values
are contained in $\binvalues_i$.
The set $\view_i[r_i,1]$ is then computed as the set of these values.
If $\view_i[r_i,1]$ contains a single value then a process sets its estimate to this value
otherwise it sets its estimate to $\bot$.
If ($t+1$) non-faulty processes have $\view[i,0] = \{0,1\}$,
and broadcast $\auxset{r}{\{0,1\}}$, then given $t < n/3$,
any set of ($n-t$) $\auxset{r}{}$ messages will contain at least one of these messages meaning
all non-faulty process will set their estimate to $\bot$.
The remaining code of the round will then ensure all processes set their estimate to the output of the coin and decide
in the following round if the coin outputs the same value at all non-faulty processes.

Non-faulty processes then make a second call to $\sbvbroadcast$ with tag $\stage[r,1]$ and their current estimate as input.
This returns the set $\view[r,2]$ (line~\ref{BBC3-10}).
A call to {\sf random}() is then made on line~\ref{BBC3-11}.
The purpose of this call to $\sbvbroadcast$ is to ensure that through SBV-Uniformity
if a non-faulty process decides then all non-faulty processes
set their estimates to the same value and decide in the following round.
For this, if $\view[r,2]$ is a single value then this value is decided and is set as the processes' estimate (line~\ref{BBC3-12}),
otherwise if view contains both a binary value and $\bot$ then the binary value is set to the processes' estimate (line~\ref{BBC3-13}),
otherwise $\view[r,2] = \{\bot\}$ and the process sets its output to the value of the coin (line~\ref{BBC3-14}).
Non-faulty processes then continue on to the next round.

\subsubsection{Proofs.}

This section shows that the algorithm of Figure~\ref{algo-BBC} solves Binary Byzantine consensus
and terminates in an expected constant number of rounds.

\begin{lemma}
  At the start of every round each non-faulty process has a binary estimate proposed by a non-faulty process.
  \label{lem:alwaysbin}
\end{lemma}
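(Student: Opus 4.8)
The plan is to prove this by induction on the round number $r$, establishing the stronger inductive statement that at the start of round $r$ (equivalently, immediately after the {\bf case} statement of round $r-1$) every non-faulty process $p_i$ holds an estimate $est_i \in \{0,1\}$ that equals the proposal of some non-faulty process. For the base case $r=1$, line~\ref{BBC3-01} sets $est_i$ to the process's own binary proposal $v_i$, so the claim holds trivially.

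For the inductive step I would assume the claim at the start of round $r$ and trace how the estimate evolves through the round. The first SBV-Broadcast at line~\ref{BBC3-03} is invoked with these binary estimates, so by SBV-Justification every value in $\view_i[r,0]$ is one of them, and by SBV-Binvalues together with BV-Justification every value in $\binvalues_i[r]$ is an estimate BV-broadcast by a non-faulty process; by the inductive hypothesis all such values are binary and were proposed by non-faulty processes. Since the wait at line~\ref{BBC3-05} only admits values belonging to $\binvalues_i[r]$, the set $\view_i[r,1]$ consists solely of binary values proposed by non-faulty processes, and hence after lines~\ref{BBC3-06}--\ref{BBC3-09} the estimate is either such a binary value (when $\view_i[r,1]$ is a singleton) or $\bot$.

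It then remains to analyse the three branches of the final {\bf case} statement that fixes the estimate for the next round. In the first two branches (lines~\ref{BBC3-12} and~\ref{BBC3-13}) the estimate is set to the binary value $v \in \view_i[r,2]$; by SBV-Justification applied to the stage $\stage[r,1]$ broadcast, $v$ was input by some non-faulty process $p_j$, and since $v \neq \bot$ this $p_j$ must have had $\view_j[r,1]=\{v\}$, so $v \in \binvalues_j[r]$ and $v$ was proposed by a non-faulty process by the argument of the previous paragraph. The main obstacle is the third branch (line~\ref{BBC3-14}), where $est_i$ is set to the coin value $s$, which is binary but a priori unrelated to any proposal. To handle it I would argue that $\view_i[r,2]=\{\bot\}$ forces, via SBV-Justification, some non-faulty process $p_j$ to have input $\bot$ to the stage $\stage[r,1]$ broadcast, and hence to have executed line~\ref{BBC3-08}; this requires $\view_j[r,1]$ to be non-singleton, and since its values are binary it must equal $\{0,1\}$. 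Both $0$ and $1$ therefore lie in $\binvalues_j[r]$, so by BV-Justification and the inductive hypothesis both $0$ and $1$ were proposed by non-faulty processes. Consequently whichever binary value the coin returns for $s$ is one proposed by a non-faulty process, which closes the induction.
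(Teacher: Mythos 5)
Your proof is correct and follows essentially the same route as the paper's: trace the estimate through the two SBV-Broadcast stages of a round, using SBV-Justification and SBV-Binvalues to confine $\view[r,1]$ and $\view[r,2]$ to non-faulty proposals, and observe that the coin branch is safe because $\bot$ can only arise when both $0$ and $1$ were proposed by non-faulty processes. You merely make explicit the induction and the case analysis that the paper leaves implicit.
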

\begin{proof}
  The initial estimate at non-faulty processes is set to its binary
  proposal $v$ on line~\ref{BBC3-01}.
  The estimate is then input to $\sbvbroadcast$ $\stage[r,0]$ on line~\ref{BBC3-03}.
  SBV-Justification and SBV-Binvalues then ensure both $\view[r,0]$ and $\view[r,1]$
  only contain binary values proposed by non-faulty processes.
  The estimate is then set to either a binary value from $\view[r,1]$ or $\bot$ if both $0$ and $1$ are contained
  in $\view[r,1]$ on lines~\ref{BBC3-07}-\ref{BBC3-08}.
  The estimate is then input to $\sbvbroadcast$ $\stage[r,1]$ on line~\ref{BBC3-10}.
  By SBV-Justification $\view[r,2]$ can only contain binary values or $\bot$ where $\bot$ means both $1$ and $0$
  were proposed by non-faulty processes.
  On lines~\ref{BBC3-12}-\ref{BBC3-14} the estimate is then set to either the binary value
  in $\view[r,2]$, or the value of {\sf random}() if $\bot \in \view[r,2]$, where in either case the value
  must have been proposed by a non-faulty process.
  From this the next round is then started with a binary estimate proposed by a non-faulty process and the proof
  is the same for all following rounds.
\end{proof}

\begin{lemma}
  Only $\bot$ or binary values can be input to $\sbvbroadcast$ $\stage[r,1]$ on line~\ref{BBC3-10} by non-faulty processes.
  The binary values input have been proposed by non-faulty processes or if $\bot$ is input
  then both $0$ and $1$ have been proposed by non-faulty processes.
  \label{lem:botbin}
\end{lemma}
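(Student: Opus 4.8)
The plan is to trace the value held in $est_i$ at the exact moment it is passed to $\sbvbroadcast$ $\stage[r,1]$ on line~\ref{BBC3-10}. That value is whatever was assigned on line~\ref{BBC3-07} or line~\ref{BBC3-08}, so the entire statement reduces to understanding the set $\view_i[r_i,1]$ computed by the wait on line~\ref{BBC3-05}. I would structure the argument as a short chain of containments followed by a two-way case split.

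First I would invoke Lemma~\ref{lem:alwaysbin} to assert that at the start of round $r$ the estimate $est_i$ fed into the first call $\sbvbroadcast$ $\stage[r,0]$ on line~\ref{BBC3-03} is a binary value proposed by a non-faulty process. Hence every value that a non-faulty process BV-Broadcasts into the $\stage[r,0]$ instance has this provenance. By the SBV-Binvalues property the returned set $\binvalues_i[r_i]$ satisfies the BV-Broadcast abstraction on these same inputs, so by BV-Justification every element of $\binvalues_i[r_i]$ is a binary value that was proposed by a non-faulty process; in particular $\binvalues_i[r_i] \subseteq \{0,1\}$.

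Next I would analyze line~\ref{BBC3-05}. Although the $\auxset{r}{}$ messages underlying $\view_i[r_i,1]$ may be sent by Byzantine processes and could in principle carry arbitrary content, clause~(i) of the wait condition requires every value placed in $\view_i[r_i,1]$ to belong to $\binvalues_i[r_i]$. This membership test is the crux of the argument: it forces $\view_i[r_i,1] \subseteq \binvalues_i[r_i] \subseteq \{0,1\}$ and guarantees that each value it contains was proposed by a non-faulty process. Consequently $\view_i[r_i,1]$ is one of $\{0\}$, $\{1\}$, or $\{0,1\}$.

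Finally I would close with the case split driven by lines~\ref{BBC3-06}--\ref{BBC3-08}. If $\view_i[r_i,1]=\{w\}$, then $est_i$ is set to the binary value $w \in \binvalues_i[r_i]$, which was proposed by a non-faulty process. Otherwise $\view_i[r_i,1]=\{0,1\}$ and $est_i$ is set to $\bot$; since both $0$ and $1$ then lie in $\binvalues_i[r_i]$, both binary values were proposed by non-faulty processes. In either case the value input on line~\ref{BBC3-10} is either $\bot$ or a binary value with the stated provenance, which is exactly the claim. I expect no genuine obstacle here; the only point deserving care is that the safety comes entirely from the $\binvalues_i[r_i]$ membership test on line~\ref{BBC3-05}, and not from trusting the broadcasters of the $\auxset$ messages, since SBV-Justification applies only to the outputs $\view_i[r_i,0]$ and $\view_i[r_i,2]$ of the two $\sbvbroadcast$ calls, not to the derived set $\view_i[r_i,1]$.
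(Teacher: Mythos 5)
Your proposal is correct and follows essentially the same route as the paper's own proof: Lemma~\ref{lem:alwaysbin} to pin down the inputs to $\sbvbroadcast$ $\stage[r,0]$, the SBV-Binvalues property (via BV-Justification) to constrain $\binvalues_i[r_i]$, the membership test in the wait on line~\ref{BBC3-05} to constrain $\view_i[r_i,1]$, and the case split on lines~\ref{BBC3-06}--\ref{BBC3-08}. Your explicit remark that the safety rests on the $\binvalues_i[r_i]$ membership check rather than on trusting the $\auxset$ senders is a correct and slightly more careful elaboration of what the paper leaves implicit.
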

\begin{proof}
  By Lemma~\ref{lem:alwaysbin} all non-faulty processes start each round with a binary estimate
  proposed by a non-faulty process which
  is input to $\sbvbroadcast$ $\stage[r,0]$ on line~\ref{BBC3-03}.
  Now by SBV-Binvalues, $\binvalues$ will only contain binary values at non-faulty processes.
  Thus $\view[r,1]$ will only contain binary values (line~\ref{BBC3-05}) proposed by non-faulty processes
  and lines~\ref{BBC3-06}-\ref{BBC3-08} will ensure only $\bot$ (if both $0$ and $1$ have been proposed by a non-faulty process)
  or a binary value is input to $\sbvbroadcast$ $\stage[r,1]$ on line~\ref{BBC3-10} by non-faulty processes.
\end{proof}

\begin{lemma}
  Non-faulty processes will only decide binary values proposed by non-faulty processes.
  \label{lem:nfdec}
\end{lemma}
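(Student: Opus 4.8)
The plan is to locate the single point in the algorithm where a non-faulty process commits to a decision and then trace the decided value back to a non-faulty proposal, using SBV-Justification together with Lemma~\ref{lem:botbin}. Inspecting Figure~\ref{algo-BBC}, the only invocation of $\decide$ occurs on line~\ref{BBC3-12}, inside the case branch guarded by the condition $\view_i[r_i,2] = \{v\} \wedge v \neq \bot$. Hence if a non-faulty process $p_i$ decides a value $v$ in some round $r_i$, it must be that $\view_i[r_i,2] = \{v\}$ with $v \neq \bot$, so that the decided value is necessarily a binary value rather than the default symbol $\bot$.

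First I would observe that $\view_i[r_i,2]$ is exactly the first component returned by the call $\sbvbroadcast$ $\stage[r_i,1](est_i)$ on line~\ref{BBC3-10}. Since $v \in \view_i[r_i,2]$ and $p_i$ is non-faulty, SBV-Justification guarantees that some non-faulty process called $\sbvbroadcast$ $\stage[r_i,1]$ with input $v$. Because the decision guard forces $v \neq \bot$, the value handed to this SBV-Broadcast instance by that non-faulty process is a genuine binary value.

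Next I would invoke Lemma~\ref{lem:botbin}, which states that the only binary values non-faulty processes feed into $\sbvbroadcast$ $\stage[r,1]$ on line~\ref{BBC3-10} are values that were themselves proposed by non-faulty processes ($\bot$ being the only other admissible input). Combining this with the previous step, the binary value $v$ input by a non-faulty process to $\stage[r_i,1]$ must have originated from a non-faulty process's proposal. Therefore every value decided by a non-faulty process is a binary value proposed by some non-faulty process, which is exactly the claim.

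I do not expect a genuine obstacle here, since the argument is a direct composition of SBV-Justification with Lemma~\ref{lem:botbin}. The only point requiring care is confirming that line~\ref{BBC3-12} is the sole location where $\decide$ is invoked and that its guard excludes $\bot$, so that the decided value is always a binary value eligible for the tracing argument of Lemma~\ref{lem:botbin}; Lemma~\ref{lem:alwaysbin} is also available should one prefer to argue directly that every estimate fed into $\stage[r_i,1]$ is binary and non-faulty-proposed.
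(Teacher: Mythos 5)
Your proof is correct and follows exactly the paper's route: identifying line~\ref{BBC3-12} as the only decision point and composing SBV-Justification with Lemma~\ref{lem:botbin} to trace the decided value back to a non-faulty proposal. The paper's own proof is just a terser version of the same argument.
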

\begin{proof}
  From line~\ref{BBC3-12} non-faulty processes can only decide binary values output by
  $\sbvbroadcast$ $\stage[r,1]$ (lines~\ref{BBC3-10}-\ref{BBC3-12}).
  By Lemma~\ref{lem:botbin} and SBV-Justification these values must have been proposed by non-faulty processes.
\end{proof}

\begin{lemma}
  If in a round $r$, $\view[r,0]$ contains a single binary value $w$ at a non-faulty process,
  then all non-faulty processes either input $w$ or $\bot$ to $\sbvbroadcast$ $\stage[r,1]$.
  \label{lem:view0single}
\end{lemma}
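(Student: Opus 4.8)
The plan is to show that the lone value $w$ held by $p_i$ in $\view_i[r,0]$ is forced into $\view_j[r,1]$ at every non-faulty $p_j$, so that the only surviving possibilities are $\view_j[r,1]=\{w\}$ (which sets $est_j\leftarrow w$ on line~\ref{BBC3-07}) and $\view_j[r,1]=\{0,1\}$ (which sets $est_j\leftarrow\bot$ on line~\ref{BBC3-08}); the forbidden case $\view_j[r,1]=\{1-w\}$, the only one that would feed $1-w$ into $\sbvbroadcast$ $\stage[r,1]$ on line~\ref{BBC3-10}, is ruled out.

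First I would pin down the admissible shapes of $\view_k[r,0]$ at every non-faulty $p_k$. By Lemma~\ref{lem:alwaysbin} the value fed to $\sbvbroadcast$ $\stage[r,0]$ on line~\ref{BBC3-03} is binary, so by SBV-Justification together with SBV-Obligation each $\view_k[r,0]$ is a non-empty subset of $\{0,1\}$. Since $p_i$ is non-faulty with $\view_i[r,0]=\{w\}$, SBV-Singleton forbids any non-faulty process from holding the singleton $\{1-w\}$; hence every non-faulty $\view_k[r,0]$ equals either $\{w\}$ or $\{0,1\}$, and in both cases contains $w$. Consequently, on line~\ref{BBC3-04} every non-faulty process broadcasts a message $\auxset{r}{\view_k[r,0]}$ whose carried set contains $w$.

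Next I would run the standard quorum-intersection argument on the wait of line~\ref{BBC3-05}. When a non-faulty $p_j$ fixes $\view_j[r,1]$, this set is obtained from the $\auxset{r}{}$ messages of $(n-t)$ distinct processes, all of whose values lie in $\binvalues_j[r]$ by condition (i) of the wait. At most $t$ of those $(n-t)$ senders are faulty, so at least $n-2t\ge 1$ of them are non-faulty, using $t<n/3$. By the previous paragraph the set carried by any such non-faulty message contains $w$, and SBV-Binvalues guarantees $w\in\binvalues_j[r]$ (as $w$ is returned in the non-faulty view $\view_i[r,0]$). Therefore $w$ is included in $\view_j[r,1]$. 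Being a non-empty subset of $\{0,1\}$ that contains $w$, the set $\view_j[r,1]$ must be $\{w\}$ or $\{0,1\}$, so lines~\ref{BBC3-06}--\ref{BBC3-08} set $est_j$ to $w$ or to $\bot$, which is exactly the value passed to $\sbvbroadcast$ $\stage[r,1]$, as required.

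The main obstacle is the third step: I must make sure a faulty coalition cannot drive some non-faulty $p_j$ to $\view_j[r,1]=\{1-w\}$ by supplying $(n-t)$ sets that all omit $w$. This is precisely where SBV-Singleton is indispensable — it forces \emph{every} non-faulty $\view[r,0]$, and hence every honest $\auxset{r}{}$ set, to contain $w$ — and where $n-t>t$ guarantees that at least one such honest, $w$-containing set is unavoidably among the $(n-t)$ that $p_j$ collects. The rest of the argument is the routine observation that the value so included cannot be discarded, since the wait condition and SBV-Binvalues place $w$ inside $\binvalues_j[r]$.
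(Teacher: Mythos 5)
Your proof is correct and takes essentially the same route as the paper's: use SBV-Singleton (with Lemma~\ref{lem:alwaysbin}) to force every non-faulty $\view[r,0]$ to be $\{w\}$ or $\{0,1\}$, then a quorum-intersection argument on line~\ref{BBC3-05} (any $(n-t)$ senders include a non-faulty one whose set contains $w$) to conclude $\view[r,1]$ is $\{w\}$ or $\{0,1\}$, so lines~\ref{BBC3-06}--\ref{BBC3-08} set the estimate to $w$ or $\bot$. Your explicit appeal to SBV-Binvalues to place $w$ in $\binvalues_j[r]$ is a small detail the paper leaves implicit, but otherwise the arguments coincide.
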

\begin{proof}
  If $\view[r,0] = \{w\}$ at a non-faulty process, then by Lemma~\ref{lem:alwaysbin} and
  SBV-Singleton either $\view[r,0] = \{w\}$ or $\view[r,0] = \{0,1\}$
  at all non-faulty processes.
  Now non-faulty processes will either broadcast either $\auxset{r}{\{w\}}$ or $\auxset{r}{\{0,1\}}$ on line~\ref{BBC3-04}.
  Given $t < n/3$ no non-faulty process will receive $n-t$ $\auxset{r}{\{\neg w\}}$ messages from distinct processes
  and as a result $view[r,1]$ will be set to either $\{w\}$ or $\{0,1\}$ at all non-faulty processes on line~\ref{BBC3-05}.
  Now by lines~\ref{BBC3-06}-\ref{BBC3-08} all non-faulty processes will set their estimate to $w$ or $\bot$
  which is then input to $\sbvbroadcast$ $\stage[r,1]$.
\end{proof}

\begin{lemma}
  For any round $r$, if a non-faulty process inputs a binary value $v$ to $\sbvbroadcast$ $\stage[r,1]$ on line~\ref{BBC3-10}
  then all non-faulty processes either input $v$ or $\bot$ to $\sbvbroadcast$ $\stage[r,1]$.
  \label{lem:binall}
\end{lemma}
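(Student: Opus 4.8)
The plan is to argue by contradiction using quorum intersection on the $\auxset$ messages of line~\ref{BBC3-04}, mirroring the proof of SBV-Singleton but one level up, since here the messages carry \emph{sets} rather than single values. First I would unpack what ``inputs a binary value $v$'' means: by lines~\ref{BBC3-06}--\ref{BBC3-08}, a non-faulty process inputs a binary value $v$ to $\sbvbroadcast$ $\stage[r,1]$ exactly when its $\view[r,1] = \{v\}$, and it inputs $\bot$ otherwise (and by Lemma~\ref{lem:botbin} these are the only possibilities). So it suffices to show that no non-faulty process can obtain $\view[r,1] = \{\neg v\}$ once some non-faulty $p_i$ has $\view_i[r,1] = \{v\}$.

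Assume for contradiction that non-faulty $p_i$ has $\view_i[r,1] = \{v\}$ and non-faulty $p_j$ has $\view_j[r,1] = \{\neg v\}$. The key step is to read off from the wait condition of line~\ref{BBC3-05} that a \emph{singleton} output pins down the contributing $\auxset$ messages: since $\view_i[r,1]$ is the union of the value-sets carried by $\auxset{r}{}$ messages from $(n-t)$ distinct processes, and all of these values lie in $\binvalues_i$, the only way this union can equal the singleton $\{v\}$ is for each of those $(n-t)$ processes to have broadcast $\auxset{r}{\{v\}}$ (each carried set is non-empty and contained in $\{v\}$). Symmetrically, the $(n-t)$ processes justifying $\view_j[r,1] = \{\neg v\}$ each broadcast $\auxset{r}{\{\neg v\}}$.

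Next I would apply quorum intersection. The two sets of $(n-t)$ distinct processes overlap in at least $2(n-t) - n = n - 2t$ processes, and since $t < n/3$ gives $n - 2t \geq t+1$, the overlap contains at least one non-faulty process $p_k$. But a non-faulty process executes line~\ref{BBC3-04} at most once per round and, via the $\broadcast$ macro, sends the same $\auxset{r}{\view_k[r,0]}$ to everyone; so $p_k$ cannot have broadcast both $\{v\}$ (as its membership in $p_i$'s quorum demands) and $\{\neg v\}$ (as its membership in $p_j$'s quorum demands). This contradiction shows $\view_j[r,1] = \{\neg v\}$ is impossible, so every non-faulty process inputs $v$ or $\bot$. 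Equivalently, one may observe that $p_k$'s single $\auxset$ carries its stage-$0$ view $\view_k[r,0]$, which by SBV-Singleton cannot be both $\{v\}$ and $\{\neg v\}$.

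I expect the main obstacle to be the second step rather than the counting: making rigorous that a singleton $\view[r,1] = \{v\}$ forces \emph{every} contributing $\auxset$ to equal $\{v\}$. Care is needed because $\auxset$ messages carry sets (a non-faulty sender could in principle hold $\view[r,0] = \{0,1\}$), so I must lean on the clause of line~\ref{BBC3-05} requiring the reported values to belong to $\binvalues_i$, together with SBV-Binvalues, to rule out a contributing auxset whose intersection with $\binvalues_i$ collapses to $\{v\}$ while still carrying $\neg v$. This is also what keeps the wait of line~\ref{BBC3-05} live, since by SBV-Binvalues the stage-$0$ view of each non-faulty process is eventually contained in every non-faulty $\binvalues$.
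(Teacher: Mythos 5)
Your proof is correct, but it takes a genuinely different route from the paper's. The paper decomposes by the stage-$0$ views: it invokes Lemma~\ref{lem:view0single} (which rests on SBV-Singleton plus a counting argument showing no process can assemble $n-t$ messages all equal to $\{\neg w\}$) to dispose of every case in which some non-faulty process has a singleton $\view[r,0]$, and then separately handles the remaining case where all non-faulty processes have $\view[r,0]=\{0,1\}$, showing they all end up with $\view[r,1]=\{0,1\}$ and input $\bot$. You instead work directly at stage $1$: two conflicting singleton views $\{v\}$ and $\{\neg v\}$ would require two quorums of $n-t$ senders whose reported sets are contained in $\{v\}$ and $\{\neg v\}$ respectively, and the $n-2t\geq t+1$ overlap yields a non-faulty sender that broadcast a single non-empty set, a contradiction. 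Your argument is more self-contained (it does not route through Lemma~\ref{lem:view0single}, and needs SBV-Singleton only as an optional remark), and it isolates cleanly the one reading of line~\ref{BBC3-05} that everything hinges on, namely that $\view[r,1]$ is the \emph{union} of the contributing sets, so a message carrying $\{0,1\}$ either forces $\neg v$ into the view or is disqualified by clause (i) --- a point you correctly identify as the crux. The paper's decomposition, by contrast, buys reusability: Lemma~\ref{lem:view0single} is cited again in the termination proof of Lemma~\ref{lem:term}. One small imprecision: you assert that \emph{each} of the $n-t$ contributing sets is non-empty, which need not hold for a Byzantine sender (it could send $\emptyset$); this is harmless because your contradiction only needs non-emptiness for the non-faulty process in the quorum intersection, which follows from SBV-Obligation, but the claim should be restricted accordingly.
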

\begin{proof}
  By Lemma~\ref{lem:alwaysbin}, Lemma~\ref{lem:view0single} and SBV-Justification this holds true for all cases
  except where all non-faulty processes have $\view[r,0] = \{0,1\}$.
  In this case and given $t < n/3$ all non-faulty processes will receive at least $t+1$ $\auxset{r}{\{0,1\}}$
  messages from distinct processes and as a result $\view[r,1] = \{0,1\}$ at all non-faulty processes (line~\ref{BBC3-05}).
  All non-faulty processes then set their estimate to $\bot$ on line~\ref{BBC3-08} which is then input to
  $\sbvbroadcast$ $\stage[r,1]$ and the lemma holds.
  % By Lemma~\ref{lem:binalways} and SBV-Justification $\view[r,0]$ can only contain binary values.
  % The result then follows directly from Lemma~\ref{lem:view0single} and SBV-Uniformity.
  % By Lemma~\ref{lem:botbin} only $\bot$ or binary values can be input to $\sbvbroadcast$ $\stage[r,1]$ on line~\ref{BBC3-10}.
  % Furthermore Lemma~\ref{lem:view0single}
  % % by non-faulty processes and by SBV-Justification $\view[r,1]$ can only cointain binary values.
  % Now for a non-faulty process to input a binary value $w$ to $\sbvbroadcast$ $\stage[r,1]$ on line~\ref{BBC3-10}
  % it must have received $(n-t)$ $\auxset{r}{\{w\}}$ messages from distinct processes on line~\ref{BBC3-05}
  % (by lines~\ref{BBC3-05}-\ref{BBC3-08}).
  % Now given $t < n/3$ no non-faulty process receives
  % $(n-t)$ $\auxset{r}{\{\neg v\}}$ messages from distinct processes on line~\ref{BBC3-05} so
  % $\view[r,1]$ must be either $\{v\}$ or $\{0,1\}$ at all non-faulty processes.
  % Then lines~\ref{BBC3-06}-\ref{BBC3-08} ensure only $\bot$ or $v$ is input to
  % $\sbvbroadcast$ on line~\ref{BBC3-10} by non-faulty processes.
\end{proof}

\begin{lemma}
  Non-faulty processes complete each round.
  \label{lem:complete}
\end{lemma}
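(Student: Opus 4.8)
The plan is to proceed by induction on the round number $r$, showing that if every non-faulty process begins round $r$ then every non-faulty process reaches the end of the \textbf{repeat} body and hence begins round $r+1$. The base case is immediate: every non-faulty process invokes $\propose{}$, executes line~\ref{BBC3-01}, and enters the loop with $r_i = 1$. For the inductive step I would isolate the only statements in the loop body that can block --- the two $\sbvbroadcast$ invocations (lines~\ref{BBC3-03} and~\ref{BBC3-10}), the \wait{} on line~\ref{BBC3-05}, and the call to {\sf random}() on line~\ref{BBC3-11} --- and argue that each of them returns; the remaining lines are purely local assignments and a case statement, so they never block.

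The two $\sbvbroadcast$ calls return by SBV-Termination; note that the induction hypothesis guarantees that every non-faulty process reaches each invocation, which is what makes the abstraction's termination property applicable. The call to {\sf random}() returns by the common-coin model, since once a non-faulty process queries the oracle for round $r$ it obtains a value. The heart of the argument is therefore the \wait{} on line~\ref{BBC3-05}. First I would observe that, having completed the first $\sbvbroadcast$, every non-faulty process reaches line~\ref{BBC3-04} and broadcasts $\auxset{r}{\view_i[r,0]}$ with a non-empty set (non-empty by SBV-Obligation). Since there are at least $n-t$ non-faulty processes and the network is reliable, each non-faulty $p_i$ eventually receives $\auxset{r}{}$ messages from at least these $n-t$ distinct senders, so condition (ii) of the wait is met by the non-faulty senders alone.

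It then remains to establish condition (i): that the values carried by these $n-t$ non-faulty messages all lie in $\binvalues_i[r]$. This is exactly what SBV-Binvalues provides, since the set $\binvalues_i[r]$ of a non-faulty process eventually contains every value returned in $\view_j[r,0]$ at every non-faulty $p_j$. Because each non-faulty $p_j$ broadcasts precisely $\view_j[r,0]$ on line~\ref{BBC3-04}, eventually all the values appearing in the $n-t$ non-faulty $\auxset{r}{}$ messages belong to $\binvalues_i[r]$, and the set $\view_i[r,1]$ formed from them satisfies both conditions, so the wait terminates.

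The main obstacle is this \wait{} on line~\ref{BBC3-05}, and specifically the need to couple the messages broadcast on line~\ref{BBC3-04} with the contents of $\binvalues$. The subtlety is that faulty processes may inject $\auxset{r}{}$ messages carrying values that never enter $\binvalues_i[r]$, so the existential in the wait cannot be satisfied by an arbitrary choice of $n-t$ received messages; I must restrict attention to the $n-t$ non-faulty senders and rely on SBV-Binvalues to guarantee their values are eventually delivered into $\binvalues_i[r]$. Once line~\ref{BBC3-05} is shown to return, the remaining lines of the round are non-blocking, which completes the inductive step and hence the lemma.
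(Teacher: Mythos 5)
Your overall structure --- induction on rounds, isolating the blocking statements, and the careful treatment of the \wait{} on line~\ref{BBC3-05} (restricting to the $n-t$ non-faulty senders and using SBV-Binvalues to place their values in $\binvalues_i[r]$) --- matches the paper's argument and is the right way to handle the one genuinely delicate wait. However, there is a gap in how you dispatch the two $\sbvbroadcast$ calls: you treat SBV-Termination as unconditional, but the implementation only terminates when the non-faulty inputs to the underlying BV-Broadcast span at most two distinct values (the paper states this restriction explicitly after the BV-Broadcast properties; if non-faulty processes split three ways, no value need ever reach the $t+1$ and $2t+1$ thresholds, and the wait on line~\ref{SBV-02} blocks forever). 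For $\sbvbroadcast$ $\stage[r,1]$ on line~\ref{BBC3-10} the a priori possible inputs are $0$, $1$, and $\bot$ --- three values --- so its termination is not justified without Lemma~\ref{lem:binall}, which guarantees that at most one binary value besides $\bot$ is input. Similarly, the termination of $\sbvbroadcast$ $\stage[r,0]$ on line~\ref{BBC3-03} needs Lemma~\ref{lem:alwaysbin} to ensure every non-faulty process enters the round with a \emph{binary} estimate; your induction hypothesis (``every non-faulty process begins round $r$'') is too weak to supply this, since without it a process could carry $\bot$ or some other value into the next round.

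The same omission resurfaces at the end of the round: you assert that the case statement on lines~\ref{BBC3-12}--\ref{BBC3-15} ``never blocks,'' but its three patterns are exhaustive only if $\view_i[r,2]$ is one of $\{v\}$, $\{v,\bot\}$, or $\{\bot\}$ for a single binary $v$. If both $0$ and $1$ could appear in $\view_i[r,2]$, no case would match and, in particular, the estimate would not be reset to a binary value for the next round, breaking your induction. The paper closes both holes with a single invocation of Lemma~\ref{lem:binall} (together with SBV-Justification): no two distinct binary values are input to $\sbvbroadcast$ $\stage[r,1]$, hence that call terminates and its output matches a listed case. Your proof needs to import that lemma (or reprove its content) to be complete.
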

\begin{proof}
  By Lemma~\ref{lem:alwaysbin} all non-faulty processes start each round with a binary estimate which
  is input to $\sbvbroadcast$ $\stage[r,0]$ on line~\ref{BBC3-03}.  
  By SBV-Termination all processes will complete the call to $\sbvbroadcast$ $\stage[r,0]$ line~\ref{BBC3-03} and broadcast
  $\auxset{r}{\view[r,0]}$ on line~\ref{BBC3-04}.
  By Lemma~\ref{lem:alwaysbin} and SBV-Binvalues the set $\view[r,0]$ will contain binary values from $\binvalues$.
  Given this, all non-faulty processes will then receive $(n-t)$ $\auxset{r}{s}$ messages from distinct
  processes where $s \in \binvalues$ and complete line~\ref{BBC3-05}.
  Non-faulty processes will then call and complete $\sbvbroadcast$ $\stage[r,1]$ on line~\ref{BBC3-10}
  by SBV-Termination.
  All non-faulty processes will then call {\sf random}() on line~\ref{BBC3-11} which will return a binary value.
  
  From Lemma~\ref{lem:binall} which ensure no two distinct binary values will be
  input to $\sbvbroadcast$ $\stage[r,1]$ on line~\ref{BBC3-10} and from SBV-Justification
  $\view[r_i,2]$ will contain either a single binary value or a single binary value and $\bot$,
  which will then match a valid case on lines~\ref{BBC3-12}-\ref{BBC3-15}.
  Processes will then continue to the next round where the same proof construction applies.
\end{proof}

\begin{lemma}
  If all non-faulty processes start a round $r$ with the same binary estimate $v$
  then they all decide $v$ in round $r$ (if not already done) and never
  decide a different value in following rounds.
  \label{lem:est}
\end{lemma}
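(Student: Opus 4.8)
The plan is to walk through a single round $r$ in which every non-faulty process begins with the common binary estimate $v$, show that each of the round's three phases preserves this agreement and forces a decision of $v$, and then close by induction over all later rounds. First I would analyze the first SBV-Broadcast on line~\ref{BBC3-03}. Because every non-faulty process inputs the same value $v$ to $\sbvbroadcast$ $\stage[r,0]$, SBV-Uniformity immediately gives $\view[r,0] = \{v\}$ at every non-faulty process. I would then argue that $\binvalues[r] = \{v\}$ as well: by SBV-Binvalues the set $\binvalues[r]$ behaves as a BV-Broadcast whose non-faulty inputs are all $v$, so BV-Justification forbids any other value; concretely, $\neg v$ is sent by at most $t$ faulty processes and is therefore never echoed (line~\ref{BV-05} requires $t+1$ copies) nor delivered into $\binvalues[r]$.

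Next I would treat the auxset phase, lines~\ref{BBC3-04}--\ref{BBC3-05}. Every non-faulty process broadcasts $\auxset{r}{\{v\}}$, so the $n-t$ non-faulty $\auxset{r}{}$ messages already witness a valid $\view[r,1] = \{v\}$. The essential point is the membership condition on line~\ref{BBC3-05}: the values forming $\view[r,1]$ must lie in $\binvalues[r] = \{v\}$, so even adversarial messages such as $\auxset{r}{\{0,1\}}$ or $\auxset{r}{\{\neg v\}}$ cannot contribute $\neg v$ to $\view[r,1]$. Hence $\view[r,1] = \{v\}$ is the only possible outcome, and lines~\ref{BBC3-06}--\ref{BBC3-07} set $est_i \leftarrow v$ at every non-faulty process.

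Finally I would handle the second SBV-Broadcast and the decision. Since all non-faulty processes now input $v$ to $\sbvbroadcast$ $\stage[r,1]$ on line~\ref{BBC3-10}, SBV-Uniformity gives $\view[r,2] = \{v\}$ with $v \neq \bot$, which matches the first case on line~\ref{BBC3-12}; thus every non-faulty process decides $v$ (if it has not already) and sets $est_i \leftarrow v$, proving the first claim. For the second claim, each non-faulty process therefore enters round $r+1$ again with estimate $v$, so the identical argument applies in every round $r' \geq r$, yielding $\view[r',2] = \{v\}$ and making line~\ref{BBC3-12} the only decision-triggering case; hence no non-faulty process ever decides a value other than $v$.

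The hard part will be the auxset phase: I must rule out any execution in which adversarial $\auxset{r}{}$ messages smuggle $\neg v$ into $\view[r,1]$ and drive some non-faulty process to $est_i \leftarrow \bot$ on line~\ref{BBC3-08}. This is exactly what the $\binvalues[r]$-membership clause of line~\ref{BBC3-05} blocks, so establishing $\binvalues[r] = \{v\}$ from SBV-Binvalues and BV-Justification is the crux on which the whole argument rests.
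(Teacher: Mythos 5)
Your proof is correct and follows essentially the same route as the paper's: SBV-Uniformity forces $\view[r,0]=\{v\}$, SBV-Binvalues restricts $\binvalues$ to $\{v\}$ so that the membership test on line~\ref{BBC3-05} yields $\view[r,1]=\{v\}$, a second application of SBV-Uniformity gives $\view[r,2]=\{v\}$ and a decision on line~\ref{BBC3-12}, and induction carries the common estimate into all later rounds. Your added detail on why adversarial $\auxset{r}{}$ messages cannot inject $\neg v$ is a welcome elaboration of what the paper leaves implicit, but it is not a different argument.
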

\begin{proof}
  By definition of the lemma, in round $r$
  all non-faulty processes input the same estimate $v$ to $\sbvbroadcast$ $\stage[r,0]$ on line~\ref{BBC3-03}.
  By SBV-Uniformity all non-faulty processes have $\view[r,0] = \{v\}$ and
  by SBV-Binvalues only $v$ can be contained in $\binvalues$ at non-faulty processes.
  All non-faulty processes then broadcast $\auxset{r}{\{v\}}$ on line~\ref{BBC3-04} and set
  $view[r,1]$ to $\{v\}$ on line~\ref{BBC3-05}. Following this $est$ is set to $v$ on line~\ref{BBC3-07}.
  All non-faulty processes then input $v$ to $\sbvbroadcast$ $\stage[r,1]$ on line~\ref{BBC3-10}
  and by SBV-Uniformity have $\view[r,2] = \{v\}$.
  All non-faulty processes then set $est$ to $v$ and decide $v$ (line~\ref{BBC3-12}) if not yet done and
  then start round $r+1$ with the same binary estimate $v$ for which the same proof holds.
\end{proof}

\begin{lemma}
  No two non-faulty processes decide different values.
  \label{lem:samedec}
\end{lemma}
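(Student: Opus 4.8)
The plan is to pin down the first decision and then let Lemma~\ref{lem:est} do the rest. Since a non-faulty process decides only on line~\ref{BBC3-12}, i.e. exactly when $\view_i[r_i,2]=\{v\}$ with $v\neq\bot$, I would let $r$ be the earliest round in which any non-faulty process decides and let $p_i$ be such a process, deciding a binary value $v$ in round $r$. The goal is then twofold: show that every non-faulty process that decides in round $r$ decides $v$, and show that every non-faulty process leaves round $r$ with estimate $v$, so that Lemma~\ref{lem:est} applies to round $r+1$.

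For the first part, if another non-faulty $p_j$ also decides in round $r$ it has $\view_j[r,2]=\{w\}$ for some binary $w$, and SBV-Singleton forces $w=v$. The second part is the crux. Since $\view_i[r,2]=\{v\}$, SBV-Inclusion gives $v\in\view_j[r,2]$ for every non-faulty $p_j$. By Lemma~\ref{lem:binall} together with SBV-Justification, no two distinct binary values are ever input to $\sbvbroadcast$ $\stage[r,1]$, so $\view_j[r,2]$ contains at most one binary value and is therefore either $\{v\}$ or $\{v,\bot\}$; in particular it can never be $\{\bot\}$. In the first case $p_j$ sets $est_j\leftarrow v$ on line~\ref{BBC3-12}, and in the second it sets $est_j\leftarrow v$ on line~\ref{BBC3-13}. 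Hence all non-faulty processes begin round $r+1$ with the common estimate $v$.

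Applying Lemma~\ref{lem:est} to round $r+1$ then shows that all non-faulty processes decide $v$ (if not already done) and never decide a different value in any later round. Combined with the round-$r$ argument, no non-faulty process ever decides a value other than $v$, which is exactly BBC-Agreement.

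I expect the main obstacle to be ruling out the branch on line~\ref{BBC3-14}: if some non-faulty $p_j$ could reach $\view_j[r,2]=\{\bot\}$ it would set $est_j$ to the coin value $s$, which may differ from $v$ and break agreement. This is precisely where SBV-Inclusion is indispensable, and its use depends on $v$ being a genuine binary singleton, so I would take care to invoke the hypothesis $v\neq\bot$ coming from the decision condition on line~\ref{BBC3-12} before appealing to SBV-Inclusion and Lemma~\ref{lem:binall}.
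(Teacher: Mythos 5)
Your proof is correct and follows essentially the same route as the paper's: identify the first deciding round $r$, use Lemma~\ref{lem:binall} with SBV-Justification to confine $\view_j[r,2]$ to subsets of $\{v,\bot\}$, rule out the $\{\bot\}$ branch of line~\ref{BBC3-14}, conclude every non-faulty process sets $est\leftarrow v$ on lines~\ref{BBC3-12}--\ref{BBC3-13}, and finish with Lemma~\ref{lem:est}. The only cosmetic difference is that you invoke SBV-Inclusion to exclude $\view_j[r,2]=\{\bot\}$ whereas the paper cites SBV-Singleton for the same step; both are valid.
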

\begin{proof}
  Let $r$ be the first round where a non-faulty process decides.
  By line~\ref{BBC3-12} for a process to decide, a single binary value $v$ must have been returned from
  $\sbvbroadcast$ $\stage[r,1]$ on line~\ref{BBC3-10}.
  Given this and SBV-Justification a non-faulty process must have input $v$ to
  $\sbvbroadcast$ $\stage[r,1]$ on line~\ref{BBC3-10}.
  Furthermore by Lemma~\ref{lem:binall} only $v$ or
  $\bot$ could have been input by non-faulty processes to $\sbvbroadcast$ $\stage[r,1]$.
  Now given SBV-Justification and SBV-Singleton $\view[r,2]$
  must be either $\{v\}$ or $\{v,\bot\}$ at all non-faulty processes.
  Thus on lines~\ref{BBC3-12}-\ref{BBC3-13} all non-faulty processes will
  set $est = v$ and either decide $v$ or not decide in round $r$.
  Lemma~\ref{lem:est} then ensures that in following rounds all non-faulty processes
  decide only $v$.
\end{proof}

\begin{lemma}
  \label{lem:term}
  Non-faulty processes decide in expected $O(1)$ rounds.
\end{lemma}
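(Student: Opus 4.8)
The plan is to prove termination by showing that in every round the non-faulty processes have at least a constant probability of finishing the round with a common binary estimate; once this happens, Lemma~\ref{lem:est} guarantees a decision within one more round. Since the coin constant $d$ is fixed, a per-round success probability bounded below by a constant yields expected termination in $O(1)$ rounds.

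First I would fix an arbitrary round $r$ and split on the values non-faulty processes feed into $\sbvbroadcast$ $\stage[r,1]$ on line~\ref{BBC3-10}. By Lemma~\ref{lem:binall} the set of binary values input by non-faulty processes has size at most one, so exactly one of two cases occurs: (a) some non-faulty process inputs a binary value $v$, and then every non-faulty process inputs $v$ or $\bot$; or (b) every non-faulty process inputs $\bot$. In case (a), SBV-Justification forces $\view[r,2] \subseteq \{v,\bot\}$ at every non-faulty process, so lines~\ref{BBC3-12}-\ref{BBC3-13} set $est$ to $v$ whenever $v \in \view[r,2]$, while line~\ref{BBC3-14} sets $est$ to the coin output when $\view[r,2] = \{\bot\}$; hence if the coin returns $v$ at all non-faulty processes they uniformly set $est = v$. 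In case (b), SBV-Uniformity gives $\view[r,2] = \{\bot\}$ everywhere, so line~\ref{BBC3-14} sets $est$ to the coin output at every non-faulty process, and they agree whenever the coin is unanimous.

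Next I would plug in the coin distribution: the coin is unanimously equal to any fixed value with probability $1/d$, so case (a) converges with probability at least $1/d$ (the event that the coin equals the already-determined $v$ everywhere) and case (b) converges with probability at least $2/d$. Taking the worst of the two gives a per-round convergence probability of at least $1/d$. Combining this with Lemma~\ref{lem:complete}, which ensures every non-faulty process reaches the next round, a convergent round $r$ means all non-faulty processes begin round $r+1$ with the same estimate and, by Lemma~\ref{lem:est}, decide by round $r+1$.

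The step I would treat most carefully is arguing that the $1/d$ bound holds conditionally, independently of the adversary, so that the number of rounds before convergence is dominated by a geometric random variable. The point is that the estimates input to $\stage[r,1]$---and therefore which case applies and the value $v$---are fixed on lines~\ref{BBC3-06}-\ref{BBC3-09}, strictly before any non-faulty process calls {\sf random}() on line~\ref{BBC3-11}. Since the coin of round $r$ is revealed only after such a call and is unpredictable, its value is independent of the history that determines the case and $v$; thus the conditional probability that round $r$ converges is at least $1/d$ even given that no earlier round has converged. Therefore the probability that convergence has not occurred after $r$ rounds is at most $(1-1/d)^r$, the expected number of rounds until convergence is at most $d$, and the process decides in expected $O(d) = O(1)$ rounds.
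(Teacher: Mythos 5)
Your overall strategy (a constant per-round probability of all non-faulty processes ending the round with a common estimate, then Lemma~\ref{lem:est}) is the same as the paper's, and your case split on the inputs to $\sbvbroadcast$ $\stage[r,1]$ mirrors its two cases. However, the step you yourself flag as the delicate one is exactly where your argument breaks. Your claim that ``the estimates input to $\stage[r,1]$ \dots are fixed on lines~\ref{BBC3-06}--\ref{BBC3-09}, strictly before any non-faulty process calls {\sf random}()'' is false in an asynchronous system: the first non-faulty process to reach line~\ref{BBC3-11} triggers the reveal of the coin while other non-faulty processes may still be waiting at line~\ref{BBC3-05} (or earlier inside $\stage[r,0]$). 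After the reveal, the adversary still controls which $\auxset{r}{}$ messages those slow processes assemble into their sets $\view[r,1]$, and hence can still influence which of your cases (a)/(b) applies and, potentially, the value $v$. So the independence of the coin from the ``target value'' does not follow from what you wrote, and without it the per-round success events need not dominate a geometric variable.

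The paper closes this gap with an argument your proof is missing: since the first call to {\sf random}() requires $n-t$ $\auxset{r}{}$ messages at line~\ref{BBC3-05}, at least $t+1$ non-faulty processes have already completed line~\ref{BBC3-04} before the coin is revealed, and the case analysis is anchored on the states of \emph{those} processes. Either one of them has $\view[r,0]=\{w\}$, in which case $w$ is fixed pre-reveal and Lemma~\ref{lem:view0single} forces every non-faulty input to $\stage[r,1]$ into $\{w,\bot\}$ no matter what the scheduler does afterwards; or all $t+1$ broadcast $\auxset{r}{\{0,1\}}$, and since any $n-t$ senders must intersect these $t+1$ processes, every non-faulty process is forced to $\view[r,1]=\{0,1\}$ and hence to input $\bot$. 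This pigeonhole step is what makes the target value (or the all-$\bot$ outcome) a deterministic function of the pre-reveal history, which is what licenses the conditional $1/d$ bound. To repair your proof you need to replace your (false) timing claim with this argument; the rest of your write-up, including the geometric tail bound and the appeal to Lemmas~\ref{lem:complete} and~\ref{lem:est}, is fine.
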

\begin{proof}
  By definition, the value output by the coin will not be revealed
  in a round $r$ until at least $1$ non-faulty processes has called {\sf random}(),
  i.e. a non-faulty process has reached line~\ref{BBC3-11}.% of Figure~\ref{algo-BBC}.
  For a non-faulty process to reach this line it must have received $n-t$ messages from distinct processes on
  line~\ref{BBC3-05} meaning at least $t+1$ non-faulty processes have reached line~\ref{BBC3-04} in round $r$
  before the coin is revealed.
  Consider the following two possible cases at the point where the $t+1$th non-faulty process reaches this line,
  and before the value of the coin for round $r$ is revealed (note that by Lemma~\ref{lem:complete} all
  non-faulty processes will eventually reach this line).

  \begin{itemize}
  \item First assume that at least one of the $t+1$ non-faulty processes had $\view[r,0]$ returned
    from $\sbvbroadcast$ $\stage[r,0]$ (line~\ref{BBC3-03}) containing a single binary value $w$.
    Now by Lemma~\ref{lem:view0single}
    all non-faulty processes will input either $\bot$ or $w$ into $\sbvbroadcast$ $\stage[r,1]$
    (line~\ref{BBC3-10}).
    Then by SBV-justification all non-faulty processes will have either $\{w\}$ or $\{w,\bot\}$ or $\{\bot\}$
    as $\view[r,2]$.
    Now if the output of the coin is $w$ at all non-faulty processes then by lines~\ref{BBC3-12}-\ref{BBC3-14} they all
    will set their estimates to $w$.
    Thus with probability of at least $1/d$ all non-faulty processes will set their estimate to the same binary value
    and by Lemma~\ref{lem:est} will decide by the following round.
    
  \item Otherwise, given Lemma~\ref{lem:alwaysbin} and SBV-Justification
    the $t+1$ non-faulty processes must have had $\view[r,0]$ returned
    from $\sbvbroadcast$ $\stage[r,0]$ (line~\ref{BBC3-03}) as the set $\{0,1\}$
    and broadcast $\auxset{r}{\{0,1\}}$ on line~\ref{BBC3-04}.
    Now given $t < n/3$, at all non-faulty processes on line~\ref{BBC3-05} any set of messages from ($n-t$) distinct processes
    will contain at least one $\auxset{r}{\{0,1\}}$ message.
    With this and by SBV-Binvalues all non-faulty processes will eventually have $\view[r,1] = \{0,1\}$
    and will set $est \leftarrow \bot$ on line~\ref{BBC3-08}.
    All non-faulty processes will then input $\bot$ into $\sbvbroadcast$ $\stage[r,1]$ on line~\ref{BBC3-10} and will return $\{\bot\}$
    as $\view[r,2]$ by SBV-Uniformity. All non-faulty processes will then set their estimates to
    the output of the coin on line~\ref{BBC3-14}.
    Thus with probability of at least $1/d$ all non-faulty processes will set their estimate to the same binary value
    and by Lemma~\ref{lem:est} will decide by the following round.
  \end{itemize}

  In both cases in round $r$ non-faulty processes reach a state where they will reach a decision with probability
  of at least $1/d$, or decision
  is ensured with probability $1 - \prod_{r=1}^{\infty} 1/d = 1$.
  From this, the expected number of rounds to reach a state from which a decision is ensured is
  $\sum_{r=1}^{\infty} \frac{r}{d}(1-\frac{1}{d})^{r-1} = d$ with all non-faulty processes deciding by round $d+1$.
\end{proof}

\begin{theorem}
  The algorithm presented in Figure~\ref{algo-BBC} solves the Binary consensus problem in ${\BAMP}[t<n/3,CC]$.
\end{theorem}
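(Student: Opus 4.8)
The plan is to establish the theorem as a straightforward assembly of the lemmas already proved, verifying each of the three BBC properties in turn. Since all the substantive work---progress through rounds, probabilistic convergence, agreement, and validity of the decided value---has been pushed into Lemmas~\ref{lem:nfdec}, \ref{lem:complete}, \ref{lem:est}, \ref{lem:samedec}, and~\ref{lem:term}, the theorem itself should require little more than citing them with the correct reading.

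First I would dispatch BBC-Validity. If every non-faulty process proposes the same value $v$, then $v$ is the only binary value proposed by a non-faulty process, so by Lemma~\ref{lem:nfdec} no value other than $v$ can ever be decided by a non-faulty process. Equivalently, one can invoke Lemma~\ref{lem:est}: since all non-faulty processes begin round $1$ with estimate $v$, they all decide $v$ and never decide otherwise. Next, BBC-Agreement is exactly the content of Lemma~\ref{lem:samedec}, so it is cited directly.

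The remaining property, BBC-Termination, is the only one that requires combining lemmas rather than a single citation, and is where a little care is needed. Lemma~\ref{lem:complete} guarantees that no non-faulty process blocks inside a round, so every non-faulty process keeps advancing. Lemma~\ref{lem:term} then shows that with probability $1$, and in an expected constant number of rounds, the processes reach a round in which all non-faulty processes set their estimate to a common binary value; Lemma~\ref{lem:est} guarantees that from such a round on every non-faulty process decides, in that round or the next. Combining these yields that every non-faulty process decides with probability $1$, giving BBC-Termination.

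The one point I would flag explicitly is that a process continues looping after deciding, since the $\decide$ call on line~\ref{BBC3-12} fires at most once through its ``if not yet done'' guard. I would note that continuing to execute further rounds cannot change the decided value, which is locked in by Lemmas~\ref{lem:samedec} and~\ref{lem:est}. I do not anticipate any genuine obstacle at this stage; the difficulty has been entirely front-loaded into the supporting lemmas, and the theorem is their immediate consequence.
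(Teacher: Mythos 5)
Your proposal is correct and matches the paper's proof, which likewise dispatches BBC-Termination via Lemma~\ref{lem:term}, BBC-Agreement via Lemma~\ref{lem:samedec}, and BBC-Validity via Lemma~\ref{lem:nfdec}. Your additional remarks on combining Lemmas~\ref{lem:complete} and~\ref{lem:est} for termination and on the post-decision looping are just a more explicit unpacking of the same argument.
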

\begin{proof}
  First recall the definition of Binary Byzantine Consensus.
  \begin{itemize}%[noitemsep]
  \item BBC-Termination. Every non-faulty process eventually decides on a value.
  \item BBC-Agreement.   No two non-faulty processes decide on different values.
  \item BBC-Validity.  If all non-faulty processes propose the same value, no
    other value can be decided.
  \end{itemize}
  BBC-Termination is ensured by Lemma~\ref{lem:term}.
BBC-Agreement and BBC-Validity are ensured by Lemmas~\ref{lem:samedec} and~\ref{lem:nfdec} respectively.
\end{proof}

\paragraph{Message broadcasts.}
As defined in~\cite{MMR15}, the implementation of the SBV-Broadcast abstraction
consists of $2$ to $3$ message broadcasts.
The algorithm of Figure~\ref{algo-BBC} consists of a single message broadcast
and two instances of SBV-Broadcasts, or $5$ to $7$ message broadcasts.
Section~\ref{sec:opt} describes a way to reduce this to $4$ to $5$ broadcasts.

\subsection{The S-Broadcast abstraction in ${\BAMP}[t<n/3]$}

The following sections present a Binary Byzantine consensus algorithm based on the
Algorithm of~\cite{MMR15} that uses
a strong ($t+1$) common coin for correctness and termination.
Each round of the algorithm of~\cite{MMR15} consists of a
call to BV-Broadcast followed by a normal message broadcast.
Unfortunately here, in order to remove the fair scheduler requirement
of~\cite{MMR15} without increasing the number of message broadcasts
the BV-Broadcast abstraction can no longer be used.
Instead an S-Broadcast abstraction is introduced that can be thought
of as breaking the BV-Broadcast abstraction into two separate instances
of S-Broadcast, one for each binary value.
Note that it follows a classical approach of broadcast/echo
used in many similar abstractions.

% ===========================================================================
\begin{figure*}[ht!]
\centering{
\fbox{
\begin{minipage}[t]{150mm}
\footnotesize
\renewcommand{\baselinestretch}{2.5}
\resetline
\begin{tabbing}
aaaA\=aaA\=aaaA\=aaaaaaaaaA\kill

{\bf opera}\={\bf tion} ${\sbroadcast}$ $\msg(v_i, \shouldbroadcast_i)$ {\bf is}\\

\line{SBC-01} \> $\svalue_i \leftarrow \tfalse$ \\

\line{SBC-02} \> {\bf if} ($\shouldbroadcast_i = \ttrue$) {\bf then} $\broadcast$ $\msg,\sval(v_i)$ {\bf end if} \\

\line{SBC-03} \> {\bf return} $\svalue_i$ \\
{\it \scriptsize  \hfill \color{gray}{// the Boolean pointed to by $\svalue_i$ has not necessarily obtained its final value when returned}} \\~\\

{\bf when} $\msg,\sval(v)$ is received where ($v = v_i$) \\

\line{SBC-04} \> {\bf if} ($\msg,\sval(v)$ received from $(t+1)$ different processes and $\msg,\sval(v)$ not yet broadcast) \\

\line{SBC-05} \>\> {\bf then} $\broadcast$ $\msg,\sval(v)$
{\it \scriptsize  \hfill \color{gray}{// a process echos a value only once}} \\

\line{SBC-06} \> {\bf end if}; \\

\line{SBC-07} \> {\bf if} ($\msg,\sval(v)$ received from $(2t+1)$ different processes) \\

\line{SBC-08} \>\> {\bf then} $\svalue_i \leftarrow \ttrue$
{\it \scriptsize  \hfill \color{gray}{// local delivery of a value}} \\

\line{SBC-09} \> {\bf end if}.

\end{tabbing}
\normalsize
\end{minipage}
}
\caption{ An algorithm implementing S-broadcast in ${\BAMP}[t<n/3]$.}
\label{algo-SBC} 
\vspace{-1em}
}
\end{figure*}
% -----------------------------------------------------------------------

The S-Broadcast abstraction takes as input a value $v$, and a Boolean value $\shouldbroadcast$.
It returns a pointer to a Boolean variable $\svalue$.
It is expected that if a non-faulty process calls S-Broadcast for a value $v$ then all non-faulty processes
call S-Broadcast for value $v$.
The S-Broadcast abstraction ensures the following properties.

\begin{itemize}
\item S-Termination. The invocation of $\sbroadcast()$ by a non-faulty process terminates.
\item S-Justification. If $\svalue_i$ returned by a call to S-Broadcast with value $v$ at a non-faulty process $p_i$
  has $\svalue_i = \ttrue$, then a non-faulty process has called $\sbroadcast(v, \ttrue)$
  
\item S-Uniformity. If $\svalue_i$ returned by a call to S-Broadcast with value $v$ at a non-faulty process $p_i$
  has $\svalue_i = \ttrue$, then eventually $\svalue_j = \ttrue$ at every non-faulty process $p_j$.
  % who called S-Broadcast with value $v$.

\item S-Obligation. If at least $t+1$ non-faulty processes have called $\sbroadcast(v, \ttrue)$,
  then eventually $\svalue_i = \ttrue$ at every non-faulty process $p_i$.
  % who called S-Broadcast with value $v$.

%\item BV-Single-value. If all non-faulty processes BV-Broadcast the same value $v$, $v$ is eventaully
%  added to the set $\binvalues_i$ of each non-faulty process $p_i$.

\end{itemize}

An implementation of S-Broadcast is described in Figure~\ref{algo-SBC}.
Here non-faulty processes call S-Broadcast with an input $v$ and a Boolean $\shouldbroadcast$.
On line~\ref{SBC-01} non-faulty processes initiate the $\svalue$ variable to false.
If $\shouldbroadcast$ is true, then non-faulty processes broadcast a message $\msg,\sval(v)$.
A pointer to $\svalue$ is then returned, note that $\svalue$ may become true at a later point in time.

Lines~\ref{SBC-04}-\ref{SBC-09} describe what happens when a message $\msg,\sval(v)$ is received with
a value $v$ equal to the input given during the invocation of S-Broadcast.
If message $\msg,\sval(v)$ has been received from ($t+1$) distinct processes then the process broadcasts the same message
if it has not already done so (lines~\ref{SBC-04}-\ref{SBC-05}) (i.e. it echos the message).
Next, if messages $\msg,\sval(v)$ has been received from ($2t+1$) distinct processes then the process sets
$\svalue$ to true (lines~\ref{SBC-07}-\ref{SBC-08}).

\begin{lemma}
  Figure~\ref{algo-SBC} satisfies the SBC-Broadcast abstraction.
  \label{lem:sbc}
\end{lemma}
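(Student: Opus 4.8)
The plan is to establish each of the four properties of the S-Broadcast abstraction (S-Termination, S-Justification, S-Uniformity, and S-Obligation) in turn, following the same echo argument used for BV-Broadcast in~\cite{MMR15}, but specialized to the single value $v$ associated with a given instance (recall that a process reacts only to $\msg,\sval(v)$ messages with $v = v_i$). S-Termination is immediate: the invocation executes lines~\ref{SBC-01}--\ref{SBC-03} without any \wait, returning a pointer to $\svalue_i$ at once; the fact that $\svalue_i$ may be set to $\ttrue$ only later, in the message-handling code, does not affect termination of the call itself.

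The core of the argument is a single echo-cascade claim that I would prove first and then reuse: \emph{if every non-faulty process eventually receives $\msg,\sval(v)$ from $t+1$ distinct processes, then eventually $\svalue_i = \ttrue$ at every non-faulty process}. This step uses the hypothesis $t < n/3$, which gives $n - t \geq 2t+1$. Under the premise, line~\ref{SBC-04} fires at each non-faulty process, so by line~\ref{SBC-05} every one of the $n - t$ non-faulty processes eventually broadcasts $\msg,\sval(v)$. Hence each non-faulty process eventually receives $\msg,\sval(v)$ from at least $n - t \geq 2t+1$ distinct processes, and line~\ref{SBC-08} sets $\svalue_i = \ttrue$. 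S-Obligation then follows directly: if $t+1$ non-faulty processes call $\sbroadcast(v,\ttrue)$, they each broadcast on line~\ref{SBC-02}, so every non-faulty process receives $t+1$ copies and the cascade claim applies. S-Uniformity follows similarly: if $\svalue_i = \ttrue$ then $p_i$ received $\msg,\sval(v)$ from $2t+1$ distinct processes, of which at least $t+1$ are non-faulty; since these $t+1$ non-faulty processes each broadcast $\msg,\sval(v)$ to all, every non-faulty process eventually receives $t+1$ copies and the cascade claim again yields $\svalue_j = \ttrue$ everywhere.

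For S-Justification I would argue that if $\svalue_i = \ttrue$ then $p_i$ received $\msg,\sval(v)$ from $2t+1$ distinct processes, at least $t+1 \geq 1$ of which are non-faulty, so some non-faulty process broadcast $\msg,\sval(v)$. The delicate point --- and the one I expect to be the main obstacle --- is ruling out the possibility that every non-faulty broadcast of $\msg,\sval(v)$ was a mere echo on line~\ref{SBC-05} triggered only by faulty senders. I would close this with the standard ``earliest broadcaster'' argument: consider the first non-faulty process to broadcast $\msg,\sval(v)$; it cannot have broadcast via line~\ref{SBC-05}, since that requires $t+1$ prior senders of $\msg,\sval(v)$, at least one of which is non-faulty and would therefore have broadcast strictly earlier, contradicting minimality. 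Hence this earliest non-faulty process broadcast via line~\ref{SBC-02}, meaning $\shouldbroadcast = \ttrue$, i.e. it called $\sbroadcast(v,\ttrue)$, which is exactly what S-Justification requires.
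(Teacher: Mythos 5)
Your proposal is correct and follows essentially the same approach as the paper: termination is immediate from the absence of blocking operations, and S-Uniformity and S-Obligation use the identical $2t{+}1 \to t{+}1$ non-faulty echo cascade under $t < n/3$. The only (cosmetic) difference is in S-Justification, where you use a direct ``earliest non-faulty broadcaster'' minimality argument while the paper argues the contrapositive (if no non-faulty process calls with $\shouldbroadcast = \ttrue$, no echo can ever be triggered); the two are interchangeable.
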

\begin{proof}
  Proof of S-Termination: Figure~\ref{algo-SBC} has no blocking operations.
  Proof of S-Justification: If no non-faulty process calls SBC-Broadcast with
  $\shouldbroadcast = \ttrue$ then no non-faulty process broadcasts $\msg\sval(v)$ on line~\ref{SBC-02}.
  Given $t$, no non-faulty process receives ($t+1$) $\msg\sval(v)$ message from distinct processes
  and no non-faulty process broadcasts $\msg\sval(v)$ on line~\ref{SBC-05}.
  From this and given $2t+1 > t+1$ no non-faulty process sets $\svalue$ to true on line~\ref{SBC-08}.
  Proof of S-Uniformity: If a non-faulty process sets $\svalue$ to true on line~\ref{SBC-08} then it has received
  ($2t+1$) $\msg\sval(v)$ messages from distinct processes. From this and given $t < n/3$ all non-faulty processes
  will receive at least ($t+1$) $\msg\sval(v)$ messages from distinct processes and broadcast $\msg\sval(v)$ if not
  already done (line~\ref{SBC-05}). From this all non-faulty processes will receive at least ($2t+1$)
  messages from distinct processes and set $\svalue \leftarrow \ttrue$ (line~\ref{SBC-08}).
  Proof of S-Obligation: If at least ($t+1$) non-faulty processes call S-Broadcast with
  $\shouldbroadcast = \ttrue$ then these processes broadcast $\msg\sval(v)$ on line~\ref{SBC-02} (if not already done)
  and all non-faulty processes receive at least ($t+1$) messages from distinct processes.
  The proof follows using the same arguments as S-Uniformity.
\end{proof}

\subsection{A Safe and Live Consensus Algorithm in ${\BAMP}[t<n/3,SCC]$.}

This section presents a Binary Byzantine Consensus algorithm using
S-Broadcast in addition to a strong ($t+1$) common coin.

\paragraph{Message types.}
The following message types are used by the consensus.
\begin{itemize}
\item $\aux{r}{s}$. An {\sc aux} message contains a round number $r$ and a binary value $s$.
\end{itemize}

\paragraph{Local variables.}
The following local variables are used at each process.
\begin{itemize}
\item $r_i$. The current round at process $i$.
\item $s_i$. The binary value returned by the most recent call to {\sf random}() at process $i$.
\item $\supportcoin_i$. A Boolean variable that indicates if the current estimate at processes $i$ is
  equal to the value of the most recent coin flip.
\item $\binptr_i$. A list of two pointers to Boolean variables returned by calls to S-Broadcast at process $i$,
  where $\binptr_i[0]$ represents the pointer returned by a call to $\sbroadcast(0, )$
  and $\binptr_i[1]$ represents the pointer returned by a call to $\sbroadcast(1, )$.
\end{itemize}

% ===========================================================================
\begin{figure*}[ht!]
\centering{
\fbox{
\begin{minipage}[t]{150mm}
\footnotesize
\renewcommand{\baselinestretch}{2.5}
\resetline
\begin{tabbing}
aaaA\=aaA\=aaaA\=aaaaaaaaaA\kill

{\bf opera}\={\bf tion} $\propose{v_i}$ {\bf is} \\

\line{BBC4-01} \> $s_i \leftarrow \neg v_i$; $r_i \leftarrow 0$; $\supportcoin_i = \tfalse$; \\

% \line{BBC4-01} \> {\bf if} ($v_i = s_i$) {\bf then} $\supportcoin_i \leftarrow \ttrue$ {\bf else} $\supportcoin_i \leftarrow \tfalse$ {\bf end if}; \\

% \line{BBC4-01} \> $\binptr_i[0] \leftarrow \bot$; $\binptr_i[1] \leftarrow \bot$; \\

\line{BBC4-02} \> $\binptr_i[s_i] \leftarrow \sbroadcast$ $\est[1](s_i, \tfalse);$ \\
{\it \scriptsize  \hfill \color{gray}{// the Boolean pointed to by $\binptr_i[s_i]$ has not necessarily obtained its final value at this point}} \\

{\bf repeat forever} \\

% \line{BBC4-01} \> {\bf if} ($\supportcoin_i$) {\bf then} \\

\line{BBC4-03} \> $r_i \leftarrow r_i + 1$; \\

\line{BBC4-04} \> $\binptr_i[\neg s_i] \leftarrow \sbroadcast$ $\est[r_i](\neg s_i, \neg \supportcoin_i);$ \\
{\it \scriptsize  \hfill \color{gray}{// the Boolean pointed to by $\binptr_i[\neg s_i]$ has not necessarily obtained its final value at this point}} \\

\line{BBC4-05} \> \wait ($\binptr_i[0] = \ttrue \vee \binptr_i[1] = \ttrue$); \\
% {\it \scriptsize  \hfill \color{gray}{// $\binvalues_i[r_i]$ has not necessarily obtained its final value when the wait terminates}} \\

\line{BBC4-06} \> {\bf case} ($\supportcoin_i = \ttrue$) {\bf then} $w \leftarrow s_i$ \\

\line{BBC4-07} \>\> $~~~$ ($\binptr_i[0] = \ttrue$) $~~$ {\bf then} $w \leftarrow 0$ \\

\line{BBC4-08} \>\> $~~~$ ($\binptr_i[1] = \ttrue$) $~~$ {\bf then} $w \leftarrow 1$ \\

\line{BBC4-09} \> {\bf end case}; \\

\line{BBC4-10} \> $\broadcast$ $\aux{r_i}{w}$; \\

\line{BBC4-11} \> \wait ($\exists$ a set $\view_i[r_i]$ such that (i) for $v \in \view_i[r_i]$, $\binptr_i[v] = \ttrue$  \\
\>\>   $~~~~~~~$ (ii) its values come from messages $\aux{r_i}{}$ received from $(n-t)$ distinct processes); \\

%\line{BBC4-01} \> \wait ($\exists$ a set of $(n-t)$ $\aux{r_i}{x}$ messages delivered from distinct processes such that\\
%\>\>   $~~~~~~~$ $\values_i \subseteq \binvalues_i[r_i]$ where $\values_i$ is the set of values $x$ carried by these ($n-t$) messages); \\

\line{BBC4-12} \> $s_i \leftarrow$ {\sf random}(); \\

\line{BBC4-13} \> {\bf case} ($\view_i[r_i]= \{s_i\}$) $~$ {\bf then} $\supportcoin_i = \ttrue$; $\decide(s_i)$ if not done yet; \\

\line{BBC4-14} \>\> $~~~$ ($\view_i[r_i]= \{0,1\}$) {\bf then} $\supportcoin_i = \ttrue$ \\

\line{BBC4-15} \>\> $~~~~~~~~~~~~~~~~~~~~~~~~~~~~~$ {\bf else} $~$ $\supportcoin_i = \tfalse$ \\

\line{BBC4-16} \> {\bf end case}; \\

{\bf end repeat}

% \line{BBC-01} \> \\

% \line{BV-01} \> $\broadcast$ $\bval(v_i)$ \\

\end{tabbing}
\normalsize
\end{minipage}
}
\caption{ An algorithm implementing binary consensus in ${\BAMP}[t<n/3,SCC]$.}
\label{algo-BBC4} 
\vspace{-1em}
}
\end{figure*}
%-----------------------------------------------------------------------

\paragraph{Algorithm description.}
Figure~\ref{algo-BBC4} presents the algorithm.
Before describing the algorithm line by line some key points are introduced.
First note that, given lines~\ref{BBC4-12}-\ref{BBC4-13}, only the value of the coin
can be decided in a round.
Second, given line~\ref{BBC4-04}, only the negation of the coin from the previous round is S-Broadcast
in all rounds following the first round.
These two points help support the intuition of the design of the algorithm as follows
(assume it is given that the output of the coin in a round $r$ is $s$):
(i) given that $\neg s$ could not have been decided in round $r$, if $s$ was a possible valid decision
in round $r$ then it remains so in round $r+1$, so a process can immediately support $s$ in round $r+1$ and
(ii) given that $s$ could have been decided in round $r$, $\neg s$ must be checked in round
$r+1$ to see if it is still a valid value to decide.
Using this and various thresholds
(including the $t+1$ threshold of the coin) and broadcasts liveness and correctness is then ensured.

The algorithm is now briefly described.
Non-faulty processes call $\propose$ with their initial proposal $v$.
Line~\ref{BBC4-01} initiates the local variables so that on lines~\ref{BBC4-02}-\ref{BBC4-04}
non-faulty processes call $\binptr[\neg v] \leftarrow \sbroadcast$ $\est[1](\neg v, \tfalse)$
and $\binptr[v] \leftarrow \sbroadcast$ $\est[1](v, \ttrue)$,
(i.e. in round $1$ they call the S-Broadcast abstraction for both $0$ and $1$, but with $\supportcoin = \ttrue$ only
for their initial proposal).
Given $t < n/3$ and S-Obligation at at least one of the $\binptr$ variables point to a Boolean variable
that becomes true and by S-Justification was proposed by a non-faulty process in the first round.

Lines~\ref{BBC4-03}-\ref{BBC4-16} are repeated for each round.
The round starts by incrementing the round variable (line~\ref{BBC4-03}.
Non-faulty processes then make a call to $\sbroadcast$ on line~\ref{BBC4-04} with the negation
of the coin from the previous round and the negation of the Boolean variable $\supportcoin$
set on the previous round as inputs.
This ensures that the input $\shouldbroadcast$ to $\sbroadcast$ is only true
if the process decided not to support the output coin from the previous round.
Given that the coin could have been decided in the previous round,
this call to the S-Broadcast abstraction is to check if the negation
of the coin remains a possible value to decide in round $r+1$.
The $\binptr$ pointer to the Boolean variable corresponding to the coin is not changed as its state
of being a value that can be decided remains the same as the previous round.
Non-faulty processes then wait on line~\ref{BBC4-05} until at least one of the $\binptr$ variables point
to a $\ttrue$ Boolean value.
On lines~\ref{BBC4-06}-\ref{BBC4-09} processes compute the value $w$ that they broadcast in an $\aux{r}$
message on line~\ref{BBC4-10}.
This value $w$ is either the value of the coin from the previous round if $\supportcoin = \ttrue$ or a binary
value for which $\binptr[w] = \ttrue$.
The set $\view[r]$ is then computed on line~\ref{BBC4-11} from the the values included with $\aux{r}$ messages
received from ($n-t$) distinct processes for which the corresponding $\binptr$ variables point to a true Boolean.
The strong ($t+1$) common coin is then computed on line~\ref{BBC4-14}.
If $view[r]$ contains a single value equal to the output of the coin then that value is decided on line~\ref{BBC4-13}
and $\supportcoin$ is set to true.
Otherwise if $view[r] = \{0,1\}$, $\supportcoin$ is set to true and no value is decided (line~\ref{BBC4-14}).
Otherwise $\supportcoin$ is set to false and no value is decided (line~\ref{BBC4-15}).
Given $t < n/3$, if a non-faulty process decided then any set of ($n-t$) $\aux{r}$ messages contain
at least one message supporting the coin, thus ensuring all non-faulty processes set $\supportcoin \leftarrow true$.
The round is then complete and the next round is started.

\subsubsection{Proofs.}

This section shows that the algorithm of Figure~\ref{algo-BBC4} solves Binary Byzantine consensus
and terminates in an expected constant number of rounds.

\begin{lemma}
  Non-faulty processes complete each round.
  \label{lem:bbc4-round}
\end{lemma}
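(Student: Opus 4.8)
The plan is to prove the lemma by induction on the round number, showing that within any round $r$ every non-faulty process passes both blocking statements (the \wait on line~\ref{BBC4-05} and the \wait on line~\ref{BBC4-11}) and that the call to {\sf random}() on line~\ref{BBC4-12} returns. The only tools needed are the four properties of Lemma~\ref{lem:sbc} (S-Termination, S-Justification, S-Uniformity, S-Obligation), the fact that there are at least $n-t \geq 2t+1$ non-faulty processes, and the $t+1$ revelation threshold of the strong common coin. The induction hypothesis is that every non-faulty process reaches line~\ref{BBC4-12} of round $r-1$, obtains the coin value of that round --- call it $s$, since the strong coin returns the same value everywhere, so that $s_i = s$ at the start of round $r$ at all non-faulty processes --- and sets $\supportcoin_i$ on lines~\ref{BBC4-13}--\ref{BBC4-15}.

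First I would handle line~\ref{BBC4-05}, which requires that at least one of $\binptr_i[0], \binptr_i[1]$ becomes $\ttrue$. For the base case $r=1$, each non-faulty process calls $\sbroadcast$ $\est[1](v_i,\ttrue)$ on line~\ref{BBC4-04} for its proposal $v_i$; since at least $n-t\geq 2t+1$ processes are non-faulty, by pigeonhole some binary value $b$ is proposed by at least $t+1$ of them, so S-Obligation makes $\binptr_i[b]=\ttrue$ everywhere. For the inductive step I split on $\supportcoin$. If some non-faulty process had $\supportcoin_i=\ttrue$ at the end of round $r-1$, then $s\in\view_i[r-1]$ (both triggering cases, lines~\ref{BBC4-13} and~\ref{BBC4-14}, contain $s$) and condition~(i) of line~\ref{BBC4-11} forced $\binptr_i[s]=\ttrue$; because line~\ref{BBC4-04} of round $r$ only reassigns $\binptr_i[\neg s]$, the slot $\binptr_i[s]$ is unchanged and still $\ttrue$, so S-Uniformity propagates $\binptr_j[s]=\ttrue$ to every non-faulty $p_j$. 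Otherwise every non-faulty process has $\supportcoin_i=\tfalse$, hence all $\geq t+1$ of them call $\sbroadcast$ $\est[r](\neg s,\ttrue)$ on line~\ref{BBC4-04}, and S-Obligation makes $\binptr_i[\neg s]=\ttrue$ everywhere. In either case line~\ref{BBC4-05} completes, and the case analysis on lines~\ref{BBC4-06}--\ref{BBC4-09} is exhaustive, so every non-faulty process picks a $w$ with $\binptr_i[w]=\ttrue$ and broadcasts $\aux{r}{w}$ on line~\ref{BBC4-10}.

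Next I would argue line~\ref{BBC4-11} and the coin. Every non-faulty process broadcasts some $\aux{r}{w}$ with $\binptr_i[w]=\ttrue$ (for the $\supportcoin_i=\ttrue$ branch, $w=s$ and $\binptr_i[s]=\ttrue$ as above; for the other branches $w$ is chosen with $\binptr_i[w]=\ttrue$ directly). By S-Uniformity each such $w$ eventually satisfies $\binptr_j[w]=\ttrue$ at every non-faulty $p_j$, so each process eventually receives $n-t$ $\aux{r}{}$ messages from distinct (non-faulty) processes whose values meet condition~(i), letting it form $\view_i[r]$ and pass line~\ref{BBC4-11}. Crucially this argument does not use the round-$r$ coin, so there is no circularity: once all non-faulty processes reach line~\ref{BBC4-12} and call {\sf random}(), at least $t+1$ of them have done so, the strong coin's round-$r$ output is revealed, and the call returns. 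The final case statement on lines~\ref{BBC4-13}--\ref{BBC4-16} is exhaustive, so the round completes.

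The main obstacle I expect is the bookkeeping that keeps the ``coin'' slot of $\binptr$ coherent across rounds and across processes: I must verify that $\binptr_i[s]$ is never overwritten between the moment it is set $\ttrue$ in round $r-1$ and its use in round $r$ (line~\ref{BBC4-04} touches only index $\neg s$, whether $s$ fell in the preserved or the freshly reassigned slot of round $r-1$), and that all non-faulty processes reference the same S-Broadcast instance at index $s$ so that S-Uniformity may be applied across processes. Both follow from the strong coin returning an identical value at every non-faulty process, which synchronizes exactly which index each process reassigns in each round; making this explicit is the delicate part of the proof.
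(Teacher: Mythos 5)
Your proof is correct and follows essentially the same strategy as the paper: induction on the round number, the pigeonhole-plus-S-Obligation argument for round $1$, and a case split on the $\supportcoin$ values of the previous round to unblock line~\ref{BBC4-05}, with S-Uniformity unblocking line~\ref{BBC4-11}. The one genuine difference is your dichotomy: you split on ``some non-faulty process ended round $r-1$ with $\supportcoin=\ttrue$'' versus ``all ended with $\supportcoin=\tfalse$,'' whereas the paper splits on whether at least $t+1$ non-faulty processes set $\supportcoin=\tfalse$. Your version is arguably cleaner: in its second case the paper invokes S-Justification to conclude $\binptr[\neg s]=\tfalse$, which is only literally guaranteed when \emph{no} non-faulty process called $\sbroadcast$ $\est[r](\neg s,\ttrue)$, whereas your argument never needs any $\binptr$ slot to stay false --- you pass line~\ref{BBC4-11} by observing that every broadcast value $w$ has $\binptr[w]=\ttrue$ locally and hence, by S-Uniformity, eventually everywhere. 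Your closing remark about all processes reassigning the same $\binptr$ index each round (so that S-Uniformity is applied to a single S-Broadcast instance) is exactly the right point to make explicit, and it does follow from the strong coin as you say.
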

\begin{proof}
  Each non-faulty process proposes an initial binary value $v$.
  By S-Termination all non-faulty processes will reach line~\ref{BBC4-05} calling S-Broadcast with both $0$ and $1$
  on lines~\ref{BBC4-02},\ref{BBC4-04}.
  On line~\ref{BBC4-03} each non-faulty process calls $\binptr[\neg v] \leftarrow \sbroadcast$ $\est[1](v, \ttrue)$
  (note that here the variables are replaced to show that non-faulty processes are calling S-Broadcast with their proposal
  and true as input).  
  Given that $t < n/3$ at least one of the S-broadcast instances will be called with
  $\shouldbroadcast = \ttrue$ for a single binary value by at least $t+1$ non-faulty processes, and
  by S-Obligation the condition on line~\ref{BBC4-06} will be satisfied.
  It follows that either line~\ref{BBC4-08} or line~\ref{BBC4-09} (or both) be satisfied at all non-faulty processes
  and they will broadcast $\aux{1}{w}$ where $\binptr[w] = \ttrue$.
  Now given S-Obligation all non-faulty processes will receive at least $(n-t)$ $\aux{1}{}$ messages with
  values $w$ that satisfy $\binptr[w] = \ttrue$.
  All non-faulty processes will then call ${\sf random}()$ and the value of the coin will be output.
  The case on lines~\ref{BBC4-13}-\ref{BBC4-15} will then be completed and non-faulty processes will
  continue onto round 2.

  Now consider by induction that all non-faulty processes have completed round $r$.
  Let the output of the coin in round $r$ be binary value $s$.
  If a non-faulty process sets $\supportcoin = true$ in round $r$ it must have had
  $\binvalues[s] = \ttrue$ (lines~\ref{BBC4-13},\ref{BBC4-14}), and given the pointer $\binvalues[s]$
  is not modified in round $r+1$, $\binvalues[s] = \ttrue$ at all non-faulty processes in round $r+1$ by S-Obligation.
  Call this note property (a).
  
  Consider the following two cases representing the values of $\supportcoin$ set in round $r$ at non-faulty processes:
  \begin{itemize}
  \item At least $t+1$ non-faulty processes set $\supportcoin \leftarrow \tfalse$ in round $r$ on line~\ref{BBC4-15}.
    Now given the strong property of the coin, these processes will then call
    $\binptr[\neg s] \leftarrow \sbroadcast \est[r+1](\neg s, \ttrue)$ and by S-Obligation eventually
    $\binptr[\neg s] = \ttrue$ at all non-faulty processes in round $r+1$ and the wait condition on line~\ref{BBC4-05}
    will be satisfied.
    From this and by (a) all non-faulty processes will broadcast messages $\aux{r+1}{v}$ where $\binptr[v] = \ttrue$
    on line~\ref{BBC4-10} which will ensure the wait on line~\ref{BBC4-11} will be completed and all non-faulty
    processes will continue to round $r+2$.
  \item Otherwise less than $t+1$ non-faulty processes set $supportcoin \leftarrow \tfalse$ in round $r$ on line~\ref{BBC4-15}.
    In this case less than $t+1$ non-faulty processes will call $\binptr[\neg s] \leftarrow \sbroadcast \est[r+1](\neg s, \ttrue)$
    and by S-Justification $\binptr[\neg s] = \tfalse$ in round $r+1$.
    Furthermore given $t < n/3$ and by (a) $\binvalues[s] = \ttrue$ at all non-faulty processes, all non-faulty processes will set
    $w \leftarrow s$ on lines~\ref{BBC4-06}-\ref{BBC4-08} and broadcast $\aux{r+1}{s}$ on line~\ref{BBC4-10}.
    The wait on line~\ref{BBC4-10} will then be satisfied and all non-faulty processes will continue to round $r+2$.
  \end{itemize}
  This completes the proof by induction.
\end{proof}

\begin{lemma}
  If a non-faulty process decides a binary value $v$ in a round $r$, then at every non-faulty process
  (i) in every round $\geq r$, $\binptr[v] = \ttrue$ (eventually) and no non-faulty process decides $\neg v$ and
  (ii) in every round $> r$, $\binptr[\neg v] = \tfalse$.
  \label{lem:bbc4-binptrsame}
\end{lemma}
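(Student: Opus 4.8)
The plan is to collapse the whole execution after round $r$ onto a two-state invariant per non-faulty process and to exploit the strong coin to keep all non-faulty processes synchronised. First I would unpack the hypothesis: by lines~\ref{BBC4-12}--\ref{BBC4-13} the deciding process $p_i$ had $\view_i[r]=\{s_i\}$ with $s_i$ the coin output, so the round-$r$ coin value is $v$; since the coin is strong, every non-faulty process obtains this same value $v$ in round $r$. Moreover $\view_i[r]=\{v\}$ is built from $\aux{r}{}$ messages of $(n-t)$ distinct processes all carrying $v$, so at least $n-2t\ge t+1$ non-faulty processes broadcast $\aux{r}{v}$ on line~\ref{BBC4-10}.

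The base case (round $r$) is a quorum-intersection argument. Any $(n-t)$-set a non-faulty $p_j$ assembles on line~\ref{BBC4-11} must meet the $\ge t+1$ non-faulty senders of $\aux{r}{v}$, so $v$ is forced into $\view_j[r]$; by the pointer condition of line~\ref{BBC4-11} this wait completes only once $\binptr_j[v]=\ttrue$, which holds eventually by S-Uniformity applied to the (common) S-Broadcast instance whose Boolean $p_i$ already observed true. Hence $\view_j[r]\in\{\{v\},\{0,1\}\}$ and, the coin being $v$, lines~\ref{BBC4-13}--\ref{BBC4-14} make every non-faulty process set $\supportcoin\leftarrow\ttrue$ and $s\leftarrow v$, decide only $v$ (never $\neg v$), and enter round $r+1$ with $\binptr[v]=\ttrue$.

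For the inductive step I would carry the invariant that at the start of each round $k>r$ every non-faulty process is in one of the states $(s,\supportcoin)\in\{(v,\ttrue),(\neg v,\tfalse)\}$, i.e. $\supportcoin=\ttrue$ exactly when $s=v$. Because the coin is strong, all non-faulty share the same $s=c_{k-1}$, so they refresh the \emph{same} pointer on line~\ref{BBC4-04} in the \emph{same} rounds, which lets me reason about a single common instance. The invariant makes line~\ref{BBC4-04} deterministic: when $s=v$ it is the call $\sbroadcast$ $\est[k](\neg v,\tfalse)$, and when $s=\neg v$ it is $\sbroadcast$ $\est[k](v,\ttrue)$. Consequently $\binptr[\neg v]$ is only ever re-initialised with $\shouldbroadcast=\tfalse$, so by S-Justification it stays $\tfalse$ throughout every round $>r$ (part (ii)); and $\binptr[v]$ is either driven true by S-Obligation in the round it is broadcast or persists true from the previous round, giving $\binptr[v]=\ttrue$ eventually in every round (part (i)).

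To close each round and restore the invariant, I would observe that with $\binptr[\neg v]=\tfalse$ the wait on line~\ref{BBC4-05} can only be released by $\binptr[v]=\ttrue$, whence lines~\ref{BBC4-06}--\ref{BBC4-08} make every non-faulty process broadcast $\aux{k}{v}$; the wait on line~\ref{BBC4-11} then yields $\view_j[k]=\{v\}$ since no other value has its pointer true. Lines~\ref{BBC4-13}--\ref{BBC4-15} decide $v$ if the round-$k$ coin is $v$ and decide nothing otherwise, in either case leaving $(s,\supportcoin)$ equal to $(v,\ttrue)$ or $(\neg v,\tfalse)$, which re-establishes the invariant for round $k+1$ and confirms no non-faulty process decides $\neg v$; liveness of the two waits is inherited from Lemma~\ref{lem:bbc4-round}. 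The step I expect to be the main obstacle is exactly the pointer aliasing of $\binptr$ across rounds: a value's pointer is refreshed on line~\ref{BBC4-04} only in rounds where that value equals $\neg s$, so I must argue, using the strong coin to pin down the common refresh schedule, that from round $r+1$ on $\binptr[\neg v]$ is never left pointing at a stale instance some earlier round could have made true, and that $\binptr[v]$ never silently reverts to $\tfalse$.
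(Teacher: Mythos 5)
Your proof is correct and follows essentially the same route as the paper's: a base case at round $r$ using quorum intersection of the $(n-t)$ $\aux{r}{v}$ senders plus the strong coin to force $\supportcoin=\ttrue$ everywhere, followed by an induction over rounds with a two-case analysis on the coin output, using S-Justification to keep $\binptr[\neg v]$ false and S-Obligation/persistence to keep $\binptr[v]$ true. Your explicit $(s,\supportcoin)$ invariant and your handling of the pointer-aliasing concern are just a more careful packaging of the paper's induction hypothesis, not a different argument.
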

\begin{proof}
  Let $r$ be the first round where a non-faulty process decides and $v$ be the binary value
  it decides.
  For this to happen the following must be true: $\binptr[v] = \ttrue$ by line~\ref{BBC4-11}
  and $s = v$ by line~\ref{BBC4-13}. Additionally, by the strong property of the coin, and by S-Obligation
  these will be true at all non-faulty processes.
  Furthermore, by line~\ref{BBC4-11} a process that decides must have received $(n-t)$ $\aux{r}{v}$ messages from
  ($n-t$) distinct processes. Now given that $t < n/3$, any set of ($n-t$) $\aux{r}{}$ messages
  from distinct processes must contain at least one $\aux{r}{v}$ message.
  Thus by lines~\ref{BBC4-13}-\ref{BBC4-14} all non-faulty processes will set $\supportcoin = \ttrue$
  and no non-faulty process will decide $\neg v$ in round $r$.
  Now in round $r+1$ all non-faulty processes will call $\binptr[\neg v] \leftarrow \sbroadcast$ $\est[r+1](\neg v, \tfalse)$
  and by S-Justification $\binptr[\neg v] = \tfalse$ at all non-faulty processes.
  From this non-faulty processes will compute $\view[r+1] = \{v\}$ on line~\ref{BBC4-11}.
  Furthermore given $\binptr[v]$ is not modified from round $r$, $\binptr[v] = \ttrue$ in round $r+1$.

  Now consider by induction that the lemma is true in a round $\rho$.
  First note that given $\binptr[v] = \ttrue$ and $\binptr[\neg v] = \tfalse$ in round $\rho$,
  $\view[\rho] = \{v\}$ at all non-faulty processes by lines~\ref{BBC4-10}-\ref{BBC4-11}.
  Consider the two possible cases for the output of the coin in round $\rho$:
  \begin{itemize}
  \item The output of {\sf random}() in round $\rho$ is $v$. Here the case on line~\ref{BBC4-13}
    is satisfied and all non-faulty processes set $\supportcoin = \ttrue$ and $\decide(v)$ if not done already.
    In round $\rho+1$ all non-faulty processes call $\binptr[\neg v] \leftarrow \sbroadcast$ $\est[\rho+1](\neg v, \tfalse)$
    and by S-Justification $\binptr[\neg v] = \tfalse$ at all non-faulty processes.
    Furthermore given $\binptr[v]$ is not modified from round $\rho$, $\binptr[v] = \ttrue$ in round $\rho+1$.
    From this non-faulty processes compute $\view[\rho+1] = \{v\}$ on line~\ref{BBC4-11} and do not decide $\neg v$.
  \item Otherwise the output of {\sf random}() in round $\rho$ is $\neg v$. Here the case on line~\ref{BBC4-15}
    is satisfied and all non-faulty processes set $\supportcoin \leftarrow \tfalse$.
    In round $\rho+1$ all non-faulty processes call $\binptr[v] \leftarrow \sbroadcast$ $\est[\rho+1](v, \ttrue)$
    and by S-Obligation $\binptr[v] = \ttrue$ (eventually) at all non-faulty processes.
    Furthermore given $\binptr[\neg v]$ is not modified from round $\rho$, $\binptr[\neg v] = \tfalse$ in round $\rho+1$.
    From this non-faulty processes compute $\view[\rho+1] = \{v\}$ on line~\ref{BBC4-11} and do not decide $\neg v$.
  \end{itemize}
  This completes the proof by induction.
\end{proof}

\begin{lemma}
  No two non-faulty processes decide different values.
  \label{lem:bbc4-decsame}
\end{lemma}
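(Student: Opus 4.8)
The plan is to obtain this lemma as essentially a corollary of Lemma~\ref{lem:bbc4-binptrsame}, which already establishes that a decision for $v$ in a round $r$ forbids any non-faulty process from deciding $\neg v$ in that round or any later round. First I would let $r$ be the first round in which any non-faulty process decides, and let $v$ be the value that process decides. By minimality of $r$, no non-faulty process decides in any round strictly before $r$, so every non-faulty decision occurs in round $r$ or later.

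Next I would rule out the two ways a disagreement could arise: two non-faulty processes deciding different values in the same round $r$, or one deciding $v$ in round $r$ and another deciding $\neg v$ in some round $> r$. For the cross-round case I would invoke part~(i) of Lemma~\ref{lem:bbc4-binptrsame} directly: once some non-faulty process decides $v$ in round $r$, no non-faulty process decides $\neg v$ in any round $\geq r$. For the same-round case, I would observe that for a non-faulty process to decide in round $r$ the case on line~\ref{BBC4-13} must hold, i.e. $\view_i[r] = \{s_i\}$ where $s_i$ is the output of {\sf random}() in round $r$; because the coin is a strong ($t+1$) common coin with $d = 2$, $s_i$ is identical at all non-faulty processes, so every non-faulty process that decides in round $r$ decides this same coin value. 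This is also already captured inside Lemma~\ref{lem:bbc4-binptrsame}, which shows that when the first decision $v$ occurs all non-faulty processes set $\supportcoin \leftarrow \ttrue$ and none decides $\neg v$ in round $r$ itself, using the $n - t$ threshold on line~\ref{BBC4-11} together with $t < n/3$ to guarantee that every set of $(n-t)$ $\aux{r}{}$ messages contains one supporting the coin value.

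Combining these observations, every non-faulty decision is for the single value $v$, so no two non-faulty processes decide different values, which is exactly BBC-Agreement. I do not expect a genuine obstacle here, since the substantive inductive argument---that deciding $v$ pins $\binptr[v] = \ttrue$ and $\binptr[\neg v] = \tfalse$ at all non-faulty processes in every subsequent round, regardless of the coin outcome---is carried out in Lemma~\ref{lem:bbc4-binptrsame}. The only point demanding care is the first-round case, ensuring that simultaneous decisions cannot split; this is closed off cleanly by the strong-coin property, which forces all deciders in a given round to agree with that round's unique coin value.
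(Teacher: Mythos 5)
Your proposal is correct and takes essentially the same approach as the paper, which simply states that the lemma follows directly from Lemma~\ref{lem:bbc4-binptrsame}; you merely spell out the details (minimality of the first deciding round, the same-round case via the strong coin, and the cross-round case via part~(i) of that lemma) that the paper leaves implicit.
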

\begin{proof}
  This follows directly from~\ref{lem:bbc4-binptrsame}.
\end{proof}

\begin{lemma}
  Non-faulty processes only call $\sbroadcast(v, \ttrue)$ if $v$ was proposed by a non-faulty process.
  \label{lem:bbc4-nfpropose}
\end{lemma}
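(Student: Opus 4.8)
The plan is to argue by strong induction on the round number that whenever a non-faulty process issues $\sbroadcast(v,\ttrue)$ with round tag $\est[\rho]$, the value $v$ was proposed by some non-faulty process. First I would isolate where true calls can occur: the call on line~\ref{BBC4-02} always passes $\tfalse$ and can be ignored, so every call with $\shouldbroadcast=\ttrue$ occurs on line~\ref{BBC4-04}, whose flag is $\neg\supportcoin_i$. Hence a true call in round $\rho$ happens exactly when $\supportcoin_i=\tfalse$ was set in round $\rho-1$, and the value broadcast is then $v=\neg s_i$, where $s_i$ still holds the round-$(\rho-1)$ coin reading because line~\ref{BBC4-12} of round $\rho$ executes only later.

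For the base case $\rho=1$, at the first execution of line~\ref{BBC4-04} the initialization on line~\ref{BBC4-01} gives $s_i=\neg v_i$ and $\supportcoin_i=\tfalse$, so the true call is for $\neg s_i=v_i$, the process's own proposal, and the claim is immediate.

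For the inductive step I would assume the claim for all rounds $<\rho$ (with $\rho\geq 2$) and examine a true call $\sbroadcast(v,\ttrue)$ in round $\rho$. As noted this forces $\supportcoin_i=\tfalse$ in round $\rho-1$, so by the case statement on lines~\ref{BBC4-13}--\ref{BBC4-15} the only remaining possibility is $\view_i[\rho-1]=\{\neg s_i\}=\{v\}$. By condition~(i) of the wait on line~\ref{BBC4-11}, the membership $v\in\view_i[\rho-1]$ requires $\binptr_i[v]=\ttrue$ at that moment. The pointer $\binptr_i[v]$ refers to the $\svalue$ returned by the most recent invocation of $\sbroadcast$ for value $v$, an instance tagged $\est[\rho']$ with $\rho'\leq\rho-1$; applying S-Justification to that instance, $\binptr_i[v]=\ttrue$ implies that some non-faulty process called $\sbroadcast(v,\ttrue)$ in round $\rho'$. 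Since $\rho'<\rho$, the induction hypothesis gives that $v$ was proposed by a non-faulty process, completing the step.

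I expect the main obstacle to be the bookkeeping of the $\binptr$ pointers. Because each round reassigns only the pointer for the negated coin value on line~\ref{BBC4-04} while leaving the other pointer untouched, $\binptr_i[v]$ may reference an S-Broadcast instance from a round strictly earlier than $\rho-1$. This is precisely why induction on the immediately preceding round does not suffice and strong induction is required, and it is the reason one must confirm that S-Justification is always invoked on an instance whose round tag is $<\rho$, so that the hypothesis applies without any circular dependence.
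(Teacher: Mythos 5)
Your proof is correct and follows essentially the same route as the paper's: base case from the round-1 initialization, then the chain $\shouldbroadcast=\ttrue$ in round $\rho$ $\Rightarrow$ $\supportcoin=\tfalse$ in round $\rho-1$ $\Rightarrow$ $\view[\rho-1]=\{v\}$ $\Rightarrow$ $\binptr[v]=\ttrue$ $\Rightarrow$ S-Justification $\Rightarrow$ induction hypothesis. Your explicit observation that $\binptr_i[v]$ may point to an S-Broadcast instance from a round strictly earlier than $\rho-1$, so that strong induction is needed, is a genuine (and welcome) sharpening of the paper's looser ``by S-Justification and induction.''
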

\begin{proof}
In round $1$ this is true as non-faulty processes only call $\sbroadcast(v, \ttrue)$ for their proposal $v$.
Now assume the lemma holds true in round $r$.
In round $r+1$ processes only call $\sbroadcast(v, \ttrue)$ if it set $\supportcoin$ to false
on line~\ref{BBC4-15} in round $r$.
For this to happen by line~\ref{BBC4-11} $\binptr[v]$ must be true in round $r$.
The proof then follows by S-Justification and induction.  
\end{proof}

\begin{lemma}
  Non-faulty processes only decide binary values proposed by non-faulty processes.
  \label{lem:bbc4-decpropose}
\end{lemma}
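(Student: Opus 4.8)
The plan is to chain together two facts already established: the S-Justification property of S-Broadcast and Lemma~\ref{lem:bbc4-nfpropose}, which between them trace any decided value back to a non-faulty proposal. The only place a non-faulty process commits to a decision is line~\ref{BBC4-13}, so I would begin there. Suppose a non-faulty process $p_i$ decides a binary value $v$ in some round $r_i$. By line~\ref{BBC4-13} the decide is executed only in the case $\view_i[r_i] = \{s_i\}$, where $s_i$ is the value just returned by {\sf random}() on line~\ref{BBC4-12}, and the value decided is exactly $s_i$; hence $v = s_i$ and $\view_i[r_i] = \{v\}$.

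Next I would argue that $v$ must satisfy $\binptr_i[v] = \ttrue$ at $p_i$. Since $v \in \view_i[r_i]$, condition~(i) of the wait on line~\ref{BBC4-11} guarantees $\binptr_i[v] = \ttrue$ at the moment $p_i$ computed $\view_i[r_i]$. By the S-Justification property applied to the S-Broadcast instance whose returned pointer is $\binptr_i[v]$ (namely a $\sbroadcast$ call with value $v$), some non-faulty process must have called $\sbroadcast(v, \ttrue)$.

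Finally, Lemma~\ref{lem:bbc4-nfpropose} states that a non-faulty process calls $\sbroadcast(v, \ttrue)$ only if $v$ was proposed by a non-faulty process. Combining this with the previous step yields that $v$ was proposed by a non-faulty process, completing the argument. I do not expect any serious obstacle here: the entire content is the observation that a decision forces $v$ into $\view_i[r_i]$, which forces $\binptr_i[v] = \ttrue$, and the justification chain does the rest. The single point worth stating carefully is that the value decided on line~\ref{BBC4-13} is the coin value $s_i$ rather than an arbitrary member of $\view_i[r_i]$; but since that case fires precisely when $\view_i[r_i] = \{s_i\}$, the coin value lies in $\view_i[r_i]$ and the argument applies to it directly.
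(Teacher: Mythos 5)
Your proof is correct and follows essentially the same route as the paper: both argue via lines~\ref{BBC4-11} and~\ref{BBC4-13} that a decided value $v$ must have $\binptr[v] = \ttrue$, then chain S-Justification with Lemma~\ref{lem:bbc4-nfpropose} to trace $v$ back to a non-faulty proposal. Your version is slightly more explicit about why the decided value (the coin output $s_i$) lies in $\view_i[r_i]$, but the substance is identical.
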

\begin{proof}
  By Lemma~\ref{lem:bbc4-nfpropose} and S-Justification the pointers $\binptr$ at non-faulty processes
  only have value $true$ if $\sbroadcast(v, \ttrue)$ was called by a non-faulty process with their
  corresponding binary value $v$.
  By lines~\ref{BBC4-11} and~\ref{BBC4-13} a non-faulty process can only decide a binary value $v$
  if $\binptr[v] = \ttrue$, the proof follows.
\end{proof}

\begin{lemma}
  Given a binary value $s$ as the output of the call to ${\sf random}()$ in a round $r$.
  If all non-faulty processes start round $r+1$ with $\supportcoin = \ttrue$,
  then all non-faulty processes decide (if they have not done so already)
  in the first round $\rho > r$ where the output of ${\sf random}()$ in round $\rho$ is $s$.
  \label{lem:bbc4-allsupportcoin}
\end{lemma}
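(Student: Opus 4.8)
The plan is to track the two S-Broadcast pointers $\binptr[s]$ and $\binptr[\neg s]$ across rounds $r+1,\ldots,\rho$ and to show that in each such round every non-faulty process computes $\view[\rho'] = \{s\}$, so that a decision of $s$ is forced exactly when the coin next returns $s$. Write $c_k$ for the common value of ${\sf random}()$ in round $k$; the strong coin makes all non-faulty processes agree on $c_k$, and by the choice of $\rho$ we have $c_r = s$, $c_{r+1} = \cdots = c_{\rho-1} = \neg s$, and $c_\rho = s$. Since $s_i$ is assigned $c_k$ on line~\ref{BBC4-12} of round $k$, at the start of any round $k$ (line~\ref{BBC4-04}) each non-faulty process has $s_i = c_{k-1}$; in particular $s_i = s$ in round $r+1$ and $s_i = \neg s$ in rounds $r+2,\ldots,\rho$. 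I would prove by induction on $\rho' \in \{r+1,\ldots,\rho\}$ that $\view[\rho'] = \{s\}$ at every non-faulty process, and that for $\rho' < \rho$ every non-faulty process ends round $\rho'$ with $\supportcoin = \tfalse$ and $s_i = \neg s$.

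First I would pin down $\binptr[\neg s]$. In round $r+1$, line~\ref{BBC4-04} invokes $\sbroadcast\ \est[r+1](\neg s, \neg\supportcoin)$; since all non-faulty processes start with $\supportcoin = \ttrue$ by hypothesis, the $\shouldbroadcast$ argument is $\tfalse$ at every non-faulty process, so S-Justification yields $\binptr[\neg s] = \tfalse$. In rounds $r+2,\ldots,\rho$ we have $s_i = \neg s$, so line~\ref{BBC4-04} refreshes $\binptr[s]$ and never touches $\binptr[\neg s]$, which therefore keeps pointing at the round-$(r+1)$ instance and stays $\tfalse$. For $\binptr[s]$, in round $r+1$ the pointer is not modified by line~\ref{BBC4-04}, so it retains its round-$r$ value; since all non-faulty processes end round $r$ with $\supportcoin = \ttrue$, property~(a) from the proof of Lemma~\ref{lem:bbc4-round} gives $\binptr[s] = \ttrue$. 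In each round $\rho' \in \{r+2,\ldots,\rho\}$ the previous coin is $\neg s$, so by the induction hypothesis every non-faulty process set $\supportcoin = \tfalse$ and $s_i = \neg s$, making line~\ref{BBC4-04} invoke $\sbroadcast\ \est[\rho'](s, \ttrue)$ at all (at least $t+1$) non-faulty processes; S-Obligation then gives $\binptr[s] = \ttrue$.

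With $\binptr[s] = \ttrue$ and $\binptr[\neg s] = \tfalse$ in hand, each non-faulty process computes $w = s$ on lines~\ref{BBC4-06}--\ref{BBC4-08}: in round $r+1$ because $\supportcoin = \ttrue$ and $s_i = s$, and otherwise because the only true pointer is $\binptr[s]$. Thus all non-faulty processes broadcast $\aux{\rho'}{s}$ on line~\ref{BBC4-10}. On line~\ref{BBC4-11}, condition~(i) filters out any $\aux{\rho'}{\neg s}$ message (as $\binptr[\neg s] = \tfalse$), while among any $(n-t)$ senders the at least $n-2t$ non-faulty ones all contribute $s$ (using $t < n/3$); hence $\view[\rho'] = \{s\}$. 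Feeding this into lines~\ref{BBC4-13}--\ref{BBC4-15}: for $\rho' < \rho$ the coin returns $s_i = \neg s$, so $\view[\rho'] = \{s\}$ is neither $\{s_i\}$ nor $\{0,1\}$ and the else branch sets $\supportcoin = \tfalse$ and $s_i = \neg s$, completing the inductive step; for $\rho' = \rho$ the coin returns $s_i = s$, so line~\ref{BBC4-13} fires and every non-faulty process decides $s$ (if it has not already), which it reaches by Lemma~\ref{lem:bbc4-round}.

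The main obstacle I anticipate is the bookkeeping of which of the two pointers is refreshed in each round: because line~\ref{BBC4-04} only ever re-invokes $\binptr[\neg s_i]$, the ``freshness'' of $\binptr[s]$ and $\binptr[\neg s]$ alternates with $s_i$, and the proof must correctly combine S-Justification (to keep $\binptr[\neg s]$ false through the single round-$(r+1)$ instance that no non-faulty process ever broadcasts) with S-Obligation (to keep $\binptr[s]$ true through the instances broadcast with $\ttrue$ once $\supportcoin$ turns false), anchored at the base case by property~(a). The residual case analysis on $w$ and the $t < n/3$ counting on line~\ref{BBC4-11} are routine.
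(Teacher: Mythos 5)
Your proof is correct and follows essentially the same route as the paper's: establish that in round $r+1$ all non-faulty processes have $\binptr[s]=\ttrue$ and $\binptr[\neg s]=\tfalse$ (via S-Justification on the round-$(r+1)$ instance and the unmodified round-$r$ pointer), show this state forces $\view=\{s\}$ and hence either a decision when the coin is $s$ or $\supportcoin=\tfalse$ when it is $\neg s$, and propagate the state forward by S-Obligation. You merely make explicit the induction and the pointer bookkeeping that the paper leaves implicit in its ``state (a)'' argument.
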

\begin{proof}
  Given by definition that $\supportcoin = \ttrue$ at all non-faulty processes in round $r+1$, they must have executed
  either line~\ref{BBC4-13} or~\ref{BBC4-14} in round $r$ and must have $\binptr[s] = \ttrue$.
  Given $\binptr[s]$ is not modified in round $r$ it remains true in round $r+1$.
  Now all non-faulty processes call $\binptr[\neg s] \leftarrow \sbroadcast$ $\est[r+1](\neg s, \tfalse)$
  on line~\ref{BBC4-04} and by S-Justification $\binptr[\neg s] = \tfalse$ at all non-faulty processes in round $r+1$.
  Notice now that $\binptr[\neg s] = \tfalse$ and $\binptr[s] = \ttrue$ at all non-faulty processes, call this
  state (a).
  
  From state (a) in round $r+1$ all non-faulty processes will
  broadcast $\aux{r+1}{s}$ on line~\ref{BBC4-10} and have $\view[r+1] = \{v\}$
  from line~\ref{BBC4-11}.
  
  Now if the call to ${\sf random}()$ returns $s$ then all non-faulty
  processes decide on line~\ref{BBC4-13}.

  Otherwise the call ${\sf random}()$ returns $\neg s$ in round $r+1$
  and all non-faulty processes set $\supportcoin=\tfalse$ on line~\ref{BBC4-15}.
  In round $r+2$ $\binptr[\neg s] = \tfalse$ at all non-faulty processes given it is not changed from round $r+1$.
  All non-faulty processes then call $\binptr[s] \leftarrow \sbroadcast$ $\est[r+2](\neg s, \ttrue)$
  on line~\ref{BBC4-04} and by S-Obligation $\binptr[s] = \ttrue$ at all non-faulty processes in round $r+2$.
  Notice that the non-faulty processes are once again in state (a) except in round $r+2$,
  the proof then follows.
\end{proof}

\begin{lemma}
  Non-faulty processes decide in expected $O(1)$ rounds.
  \label{lem:bbc4-decconstant}
\end{lemma}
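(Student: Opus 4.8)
The plan is to reduce termination to Lemma~\ref{lem:bbc4-allsupportcoin} together with a single probabilistic claim: in every round, conditionally on the past and with probability at least $1/2$, all non-faulty processes finish the round with $\supportcoin = \ttrue$. Once such a round $r$ occurs, Lemma~\ref{lem:bbc4-allsupportcoin} (applied with $s$ equal to the round-$r$ coin) guarantees that every non-faulty process decides at the first round $\rho > r$ whose coin equals $s$; since the strong coin ($d=2$) returns $s$ with probability $1/2$ each round, this takes an expected $2$ further rounds. Chaining the two geometric waits (each with success probability $\geq 1/2$), and using that the coins of rounds $>r$ are independent of the history up to $r$, bounds the expected decision round by a constant.

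The core of the argument is the probabilistic claim. First I would observe that, by Lemma~\ref{lem:bbc4-round}, every non-faulty process completes the round, reaching line~\ref{BBC4-11} where it finalizes a non-empty set $\view[r] \in \{\{0\},\{1\},\{0,1\}\}$, and line~\ref{BBC4-12} where it reads the coin $s$. A process sets $\supportcoin = \tfalse$ (line~\ref{BBC4-15}) exactly when $\view[r] = \{\neg s\}$; hence all non-faulty processes end with $\supportcoin = \ttrue$ iff no non-faulty process finalized the singleton $\{\neg s\}$. The key structural step is to show that singleton views of non-faulty processes cannot conflict: if $p_i$ finalized $\view[r] = \{0\}$ it must have collected $\aux{r}{0}$ from $(n-t)$ distinct processes, and likewise $p_j$ finalizing $\{1\}$ collected $\aux{r}{1}$ from $(n-t)$ distinct processes; since $2(n-t)-n = n-2t \geq t+1$, these two quorums share at least one non-faulty process, which would have broadcast two different $\aux{r}{}$ values --- a contradiction. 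Thus there is a single value $v^\star$ such that every singleton view, if one exists at all, equals $\{v^\star\}$.

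It then suffices to bound the probability of the good event. If no non-faulty process has a singleton view, each takes line~\ref{BBC4-14} and sets $\supportcoin = \ttrue$ regardless of the coin, so the event holds with probability $1$. Otherwise all singletons equal $\{v^\star\}$, and no process can have $\view[r] = \{\neg s\}$ precisely when $s = v^\star$. The crucial point is that $v^\star$ is determined by the views finalized on line~\ref{BBC4-11}, which are committed strictly before any non-faulty process calls {\sf random}() on line~\ref{BBC4-12}; by the definition of the strong $(t+1)$ coin its value is unpredictable until $t+1$ non-faulty processes have invoked it, so the event $s = v^\star$ is independent of $v^\star$ and has probability exactly $1/2$. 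Hence the good event has probability at least $1/2$ in every round, conditionally on the history, which completes the claim and, combined with Lemma~\ref{lem:bbc4-allsupportcoin}, the expected $O(1)$ bound.

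The main obstacle is the structural (non-probabilistic) part rather than the bookkeeping: pinning down that $\supportcoin = \tfalse$ holds exactly for $\view[r] = \{\neg s\}$, establishing the quorum-intersection argument that rules out conflicting singleton views, and being careful that $\view[r]$ (hence $v^\star$) is finalized before the coin is revealed so that independence of the coin from $v^\star$ may legitimately be invoked. Once these facts are secured, the convolution of the two geometric waits and the appeal to Lemma~\ref{lem:bbc4-allsupportcoin} are routine.
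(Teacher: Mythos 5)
Your reduction hinges on the claim that, in every round $r$ and conditionally on the history, with probability at least $1/2$ \emph{no} non-faulty process finalizes $\view[r]=\{\neg s\}$, so that all of them leave round $r$ with $\supportcoin=\ttrue$. The justification you give --- that the views of line~\ref{BBC4-11} ``are committed strictly before any non-faulty process calls {\sf random}()'' --- is where the argument breaks. The coin for round $r$ is revealed as soon as $t+1$ non-faulty processes have called {\sf random}(); at that instant only those $t+1$ processes have committed their views (and, possibly, their $\aux{r}{}$ broadcasts). The remaining non-faulty processes are still at lines~\ref{BBC4-05}--\ref{BBC4-11}, and the adversary, now knowing $s$, can still steer them: for a process with $\supportcoin=\tfalse$ the value $w$ chosen on lines~\ref{BBC4-07}--\ref{BBC4-08} depends on which of $\binptr[0],\binptr[1]$ the scheduler causes to become $\ttrue$ first, and the $(n-t)$-quorum of $\aux{r}{}$ messages used on line~\ref{BBC4-11} is also chosen by the scheduler. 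Concretely, take $n=4$, $t=1$ (realizable already in round $1$ with proposals $0,1,0$, where both $\binptr[0]$ and $\binptr[1]$ eventually become $\ttrue$ everywhere): $p_1$ broadcasts $\aux{r}{0}$, $p_2$ broadcasts $\aux{r}{1}$, both finalize $\view[r]=\{0,1\}$ and call {\sf random}(); the coin is then revealed while $p_3$ is still steerable, so the adversary makes $p_3$ broadcast $\aux{r}{\neg s}$ and delivers to it $\aux{r}{\neg s}$ from itself, from whichever of $p_1,p_2$ sent $\neg s$, and from the Byzantine $p_4$, yielding $\view_3[r]=\{\neg s\}$ and $\supportcoin_3=\tfalse$ \emph{for either outcome of the coin}. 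Your ``good event'' can therefore have probability $0$ rather than $\geq 1/2$, and the per-round geometric bound collapses. The quorum-intersection observation (no two non-faulty singletons conflict within one execution) is correct but does not help, because the realized singleton is chosen adaptively after the reveal.

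The paper avoids this by anchoring its case analysis exclusively on the $t+1$ non-faulty processes whose round-$r$ state \emph{is} frozen at the moment of reveal. If one of them holds a singleton $\{v\}$, then with probability $1/2$ the coin equals $v$, that process decides, and intersection with its all-$v$ quorum forces every other non-faulty view to contain $v$, so everyone sets $\supportcoin=\ttrue$. If all $t+1$ hold $\{0,1\}$, they set $\supportcoin=\ttrue$ unconditionally and broadcast $\aux{r+1}{s}$ in round $r+1$, which places $s$ in every non-faulty $\view[r+1]$; the success event is then deferred to round $r+1$'s coin matching $s$. You need some version of this ``defer by one round'' step, or another argument that neutralizes the post-reveal adversary; as written the proof does not go through.
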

\begin{proof}
  Consider a round $r$.
  Given the $t+1$ property of the coin, at least $t+1$ non-faulty processes have reached line~\ref{BBC4-12}
  before the output of ${\sf random}()$ is revealed.
  Consider the following two cases when the $t+1$th non-faulty process reaches this line:
  \begin{itemize}
  \item At least one of these non-faulty processes has computed $\view[r] \leftarrow \{v\}$ on line~\ref{BBC4-11}, i.e.
    $\view[r]$ contains a single binary value $v$.
    In this case if the output of the call to {\sf random}() is $v$ then the process will decide $v$
    on line~\ref{BBC4-13} and decide in the next round where the output of the coin is $v$ (if not done already)
    by Lemmas~\ref{lem:bbc4-binptrsame} and~\ref{lem:bbc4-allsupportcoin}.
  \item Otherwise given $t < n/3$
    all $t+1$ of these non-faulty processes have set $\view[r] \leftarrow \{0,1\}$ on line~\ref{BBC4-11}.
    Let the output of the the call to {\sf random}() be $s$ in round $r$.
    Note that the wait on line~\ref{BBC4-11} and S-Uniformity ensure $\binptr[s] = \ttrue$ at all non-faulty
    processes in round $r$ and given the pointer is not changed in round $r+1$, this remains true in round $r+1$.
    Now on line~\ref{BBC4-14} the same $t+1$ non-faulty processes set $\supportcoin \leftarrow \ttrue$ 
    and in round $r+1$ broadcast $\aux{r+1}{s}$.
    Given $t < n/3$ all sets of ($n-t$) distinct $\aux{r+1}{}$ messages will contain at least
    one $\aux{r+1}{s}$ message.
    It then follows that $s \in \view[r+1]$ at all non-faulty processes.
    Now if the output of {\sf random}() is $s$ in round $r+1$ then all non-faulty processes will set
    $\supportcoin_i \leftarrow \ttrue$ on lines~\ref{BBC4-13} or line~\ref{BBC4-14} and by Lemma~\ref{lem:bbc4-allsupportcoin}
    all non-faulty processes will decide by the next round where the output of the call to ${\sf random}()$ is $s$.
  \end{itemize}
  In both cases in round $r$ non-faulty processes reach a state where they will reach a decision with probability
  of at least $1/2$ by the strong property of the coin, or decision
  is ensured with probability $1 - \prod_{r=1}^{\infty} 1/2 = 1$.
  From this, the expected number of rounds to reach a state from which a decision is ensured is
  $\sum_{r=1}^{\infty}r\frac{1}{2^r} = 2$, and by Lemmas~\ref{lem:bbc4-allsupportcoin} and~\ref{lem:bbc4-binptrsame}
  all processes will decide by the next round where the coin flip results in the same value,
  i.e. another expected $2$ rounds.
  
\end{proof}

\begin{theorem}
  The algorithm presented in Figure~\ref{algo-BBC4} solves the Binary consensus problem in ${\BAMP}[t<n/3,CC]$.
\end{theorem}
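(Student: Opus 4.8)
The plan is to verify each of the three defining properties of Binary Byzantine Consensus — BBC-Termination, BBC-Agreement, and BBC-Validity — by appealing directly to the lemmas already established for the algorithm of Figure~\ref{algo-BBC4}, mirroring the bookkeeping used in the proof of the first theorem. (I note in passing that the stated model tag should be the strong-coin model $\BAMP[t<n/3,SCC]$, since the termination argument relies on the $d=2$, $t+1$-threshold coin.)

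First I would dispatch BBC-Termination using Lemma~\ref{lem:bbc4-decconstant}, which shows that non-faulty processes decide in expected $O(1)$ rounds. This lemma already rests on Lemma~\ref{lem:bbc4-round} (every non-faulty process completes each round, so no process blocks forever at the \wait statements on lines~\ref{BBC4-05} and~\ref{BBC4-11}) together with Lemmas~\ref{lem:bbc4-binptrsame} and~\ref{lem:bbc4-allsupportcoin}. Since these are taken as given, BBC-Termination is immediate. Next, BBC-Agreement follows directly from Lemma~\ref{lem:bbc4-decsame}, which asserts that no two non-faulty processes decide different values; that lemma is itself the induction of Lemma~\ref{lem:bbc4-binptrsame}, showing that once some non-faulty process decides $v$ in round $r$, every non-faulty process has $\binptr[v] = \ttrue$ and $\binptr[\neg v] = \tfalse$ in all later rounds, so $\neg v$ can never be decided.

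Finally, I would derive BBC-Validity from Lemma~\ref{lem:bbc4-decpropose}, which states that non-faulty processes only decide binary values proposed by non-faulty processes. If all non-faulty processes propose the same value $v$, then $v$ is the unique value proposed by any non-faulty process, so by Lemma~\ref{lem:bbc4-decpropose} no value other than $v$ can be decided, which is exactly what BBC-Validity requires.

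The theorem itself is therefore a routine assembly of the preceding lemmas, and the genuine difficulty lives entirely inside them rather than in the final statement. The step I expect to be the main obstacle is the termination argument of Lemma~\ref{lem:bbc4-decconstant}: one must combine the $t+1$ revelation threshold of the strong coin (at least $t+1$ non-faulty processes reach line~\ref{BBC4-12} before the coin is revealed) with a case analysis on whether some such process computed a singleton $\view[r]$, and then invoke the quorum-intersection property — namely that $t<n/3$ forces any $(n-t)$ set of $\aux{r}{}$ messages to overlap the set of processes supporting the coin — to guarantee that all non-faulty processes reach agreement on the coin value with probability at least $1/2$ per round, which by Lemmas~\ref{lem:bbc4-allsupportcoin} and~\ref{lem:bbc4-binptrsame} then forces a common decision within a constant expected number of further rounds.
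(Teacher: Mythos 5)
Your proof matches the paper's own argument exactly: BBC-Termination from Lemma~\ref{lem:bbc4-decconstant}, BBC-Agreement from Lemma~\ref{lem:bbc4-decsame}, and BBC-Validity from Lemma~\ref{lem:bbc4-decpropose}, with the theorem being a routine assembly of these. Your side observation that the model tag in the statement should read ${\BAMP}[t<n/3,SCC]$ rather than $CC$ is also correct, as the algorithm of Figure~\ref{algo-BBC4} relies on the strong $(t+1)$ common coin.
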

\begin{proof}
  First recall the definition of Binary Byzantine Consensus.
  \begin{itemize}%[noitemsep]
  \item BBC-Termination. Every non-faulty process eventually decides on a value.
  \item BBC-Agreement.   No two non-faulty processes decide on different values.
  \item BBC-Validity.  If all non-faulty processes propose the same value, no
    other value can be decided.
  \end{itemize}
  BBC-Termination is ensured by Lemma~\ref{lem:bbc4-decconstant}.
  BBC-Agreement and BBC-Validity are ensured by Lemmas~\ref{lem:bbc4-decsame} and~\ref{lem:bbc4-decpropose} respectively.
\end{proof}

\paragraph{Message broadcasts.}
The S-Broadcast consists of at most $1$ message broadcast.
The first round of the consensus algorithm consists of $2$ instances of S-Broadcast (where non-faulty
processes perform a normal broadcast within at least one of these), followed by a normal message broadcast,
or $2$ to $3$ message broadcasts.
All following rounds consist of a call to S-Broadcast followed by a normal message broadcast,
or $1$ to $2$ message broadcasts.

\subsection{Optimization for the Safe and Live Consensus Algorithm in ${\BAMP}[t<n/3,CC]$ from Figure~\ref{algo-BBC}.}
\label{sec:opt}
Notice that the strong  ($t+1$) common coin algorithm from Figure~\ref{algo-BBC4} uses
the fact that a binary value may remain valid from the previous round in order to only perform a single S-Broadcast
per round in rounds following the first round.
A similar technique can be used to reduce the number of broadcasts used by the SBV-Abstractions of the algorithm
in Figure~\ref{algo-BBC}.
Consider the binary value $v$ ($v \neg \bot)$ from $\view[r,2]$ which may be set to non-faulty processes' estimates
on lines~\ref{BBC4-12}-\ref{BBC4-13}.
By Lemma~\ref{lem:binall} and SBV-Binvalues this can only be a single binary value and if such a value
exists then all non-faulty processes have it in their $\binvalues$ returned from the call to $\sbvbroadcast$ $\stage[r,1]$.
With this, non-faulty processes can skip the initial broadcast of the value in the following round
and broadcast it directly within an ${\sc aux}$ message in $\sbvbroadcast$ $\stage[r+1,0]$, where
the $\binvalues$ returned by $\sbvbroadcast$ $\stage[r,1]$ ensures its validity.
Furthermore notice that this does not affect termination as this binary value is only used in the termination proof
when it is equal to the value output by the coin at all non-faulty processes,
in which case all non-faulty processes take this value as the estimate regardless.

This optimization then can be further applied to $\bot$ when input to the $\sbvbroadcast$ $\stage[r,1]$ on line~\ref{BBC4-10}.
Here processes can immediately broadcast $\bot$ in the {\sc aux} message, using the fact that all
non-faulty processes will $\binvalues[r] = \{0,1\}$ (the $\binvalues$ returned by $\sbvbroadcast$ $\stage[r,0]$) as support.
Again termination is not effected using a similar argument as before.

This optimization reduces the number of message broadcast performed by the SBV-Broadcast abstraction from between $2$ and $3$
to between $1$ and $2$.
Notice that if $\sbvbroadcast$ $\stage[r,1]$ only uses a single broadcast, then $\sbvbroadcast$ $\stage[r,2]$ will
use $2$ broadcasts, as the single broadcast of $\sbvbroadcast$ $\stage[r,2]$ can only be used for $\bot$
which requires $\sbvbroadcast$ $\stage[r,1]$ to use $2$ broadcasts.
Thus the number of message broadcasts performed per round becomes $4$ to $5$.

Note that a similar opimization could be applied to the algorithm of~\cite{CGLR18} that uses a
similar construction but relies on a synchrony assumption for termination.

%=========================================================================

\begin{thebibliography}{99}
\footnotesize{  

\bibitem{A03}
James Aspnes. Randomized protocols for asynchronous consensus.
{\it Distrib. Comput.}, 16(2-3):165-175, September 2003.

\bibitem{BO83}
Michael Ben-Or. Another advantage of free choice (extended abstract):
Completely asynchronous agreement protocols. {\it In Proceedings of the
Second Annual ACM Symposium on Principles of Distributed Computing}, PODC '83, pages 27-30, 1983

\bibitem{BG93}
Berman P. and Garay J.A., Randomized distributed agreement revisited.
{\it 33rd Annual Int'l Symposium on Fault-Tolerant Computing (FTCS' 93)},
IEEE Computer Press, pp. 412-419, 1993.

\bibitem{BSA14}
Alyson Bessani, Joao Sousa, and Eduardo E. P. Alchieri. State machine
replication for the masses with bft-smart. {\it In 2014 44th Annual IEEE/IFIP
International Conference on Dependable Systems and Networks}, pages
355-362, June 2014.

%\bibitem{BLS04}
%Dan Boneh, Ben Lynn, and Hovav Shacham. 2004. Short Signatures from the Weil Pairing.
%{\it J. Cryptol. 17, 4} (September 2004), 297-319.

\bibitem{B87}
Gabriel Bracha. An o(log n) expected rounds randomized byzantine
generals protocol. {\it J. ACM}, 34(4):910-920, October 1987

\bibitem{BT83}
Gabriel Bracha and Sam Toueg. Asynchronous consensus and byzantine protocols in faulty environments. Technical Report TR83-559,
Cornell University, 1983.

%\bibitem{B03}
%Boldyreva, Alexandra. Threshold signatures, multisignatures and blind signatures based on the gap-Diffie-Hellman-group signature scheme.
%{\it International Workshop on Public Key Cryptography}. Springer, Berlin, Heidelberg, 2003.

%\bibitem{B89}
%C. Boyd. Public-key cryptography and re-usable shared secrets. {\it In H. Baker and F. Piper,
%editors, Cryptography and Coding}, pages 241-246. Clarendon Press, 1989.

%\bibitem{CT96}
%Chandra T. and Toueg S., 
%Unreliable failure detectors for reliable distributed systems. 
%{\em Journal of the ACM}, 43(2):225-267 (1996)

\bibitem{CKS05}
Christian Cachin, Klaus Kursawe, and Victor Shoup. Random oracles
in constantinople: Practical asynchronous byzantine agreement using
cryptography. {\it Journal of Cryptology}, 18(3):219-246, 2005.

\bibitem{CR93}
Ran Canetti and Tal Rabin. Fast asynchronous byzantine agreement
with optimal resilience. In Proceedings of the Twenty-fifth Annual ACM
Symposium on Theory of Computing, {\it STOC '93}, pages 42-51, 1993.

\bibitem{CL02}
 Miguel Castro and Barbara Liskov. Practical byzantine fault tolerance
and proactive recovery. {\it ACM Trans. Comput}. Syst., 20(4):398-461,
November 2002.

\bibitem{C20}
Tyler Crain. A Simple and Efficient Asynchronous Randomized Binary Byzantine Consensus Algorithm.
{\it arXiv preprint arXiv:2002.04393}, 2020.
  
\bibitem{C220}
Tyler Crain. A Simple and Efficient Binary Byzantine Consensus Algorithm
using Cryptography and Partial Synchrony. {\it arXiv preprint arXiv:2001.07867}, 2020.

\bibitem{CGLR18}
Tyler Crain, Vincent Gramoli, Mikel Larrea, and Michel Raynal.
Dbft: Efficient leaderless byzantine consensus and its applications to
blockchains. {\it In Proceedings of the 17th IEEE International Symposium
on Network Computing and Applications (NCA'18)}. IEEE, 2018.

% \bibitem{CH89}
% R. A. Croft and S. P. Harris. Public-key cryptography and re-usable shared secrets. {\it In
% H. Baker and F. Piper, editors, Cryptography and Coding}, pages 189-201. Clarendon Press,
% 1989.

% \bibitem{D88}
% Y. Desmedt. Society and group oriented cryptography: A new concept. {\it In C. Pomerance,
% editor, Advances in Cryptology: CRYPTO '87}, volume 293 of Lecture Notes in Computer
% Science, pages 120-127. Springer, 1988.

% \bibitem{DF90}
% Y. Desmedt and Y. Frankel. Threshold cryptosystems. {\it In G. Brassard, editor, Advances
% in Cryptology: CRYPTO '89}, volume 435 of Lecture Notes in Computer Science, pages
% 307-315. Springer, 1990.

\bibitem{DDS87}
Danny Dolev, Cynthia Dwork, and Larry Stockmeyer. On the minimal
synchronism needed for distributed consensus. {\it J. ACM}, 34(1):77-97,
January 1987.

%\bibitem{DS01}
%Douglas R. Stinson and Reto Strobl.
%Provably Secure Distributed Schnorr Signatures and a (t, n) Threshold Scheme for Implicit Certificates.
%{\it In P. 6th Australasian Conference on Information Security and Privacy (ACISP '01)}. Springer-Verlag, 2001.

\bibitem{DLS88}
Cynthia Dwork, Nancy A. Lynch, and Larry J. Stockmeyer. Consensus
in the presence of partial synchrony. {\it J. ACM}, 35(2):288-323, 1988.

\bibitem{FM97}
PESECH FELDMAN and SILVIO Micali. An optimal probabilistic
protocol for synchronous byzantine agreement. {\it SIAM J. Computing},
26(4):873-933, 1997.

%\bibitem{FS86}
%A. Fiat and A. Shamir.  How to prove yourself:  Practical solutions to identification andsignature  problems.
%{\em Advances  in  Cryptology: CRYPTO  '86}, volume 263. Springer, 1987.

\bibitem{FL82}
Fischer M.J. and Lynch N.A.,
A lower bound for the time to assure interactive consistency.
{\em Information Processing Letters}, 14(4):183-186 (1982)


\bibitem{FLP85}
Fischer M.J., Lynch N.A.,  and Paterson M.S.,
Impossibility of distributed consensus with one faulty process.
{\em Journal of the ACM}, 32(2):374-382 (1985)

\bibitem{FMR07}
Friedman R., Most\'efaoui A., Rajsbaum S., and Raynal M., Distributed agreement problems and their
connection with error-correcting codes. {\it IEEE Transactions on Computers}, 56(7):865-875, 2007. 

\bibitem{FP90}
Oded Goldreich and Erez Petrank. The best of both worlds: Guaranteeing termination in fast randomized byzantine agreement protocols.
{\it Inf. Process. Lett.}, 36(1):45-49, 1990.

%\bibitem{GO}
%  Go Programming Language. \url{https://golang.org/}.
  
%\bibitem{GOECDSA}
%  Go ECDSA Implementation. \url{https://golang.org/pkg/crypto/ecdsa/}.

\bibitem{KS16}
  Valerie King and Jared Saia. Byzantine agreement in expected polynomial time. {\it J. ACM}, 63(2):13, 2016.

%\bibitem{kyber}
%  Kyber library, Advanced crypto library for the Go language. \url{https://github.com/dedis/kyber}.

\bibitem{LSP82}
Leslie Lamport, Robert Shostak, and Marshall Pease. The byzantine
generals problem. {\it ACM Trans. Program. Lang. Syst.}, 4(3):382-401, July
1982.

\bibitem{LVCQV16}
Shengyun Liu, Paolo Viotti, Christian Cachin, Vivien Qu\'ema, and
Marko Vukolic. XFT: practical fault tolerance beyond crashes. {\it In 12th
USENIX Symposium on Operating Systems Design and Implementation,
OSDI 2016}, Savannah, GA, USA, November 2-4, 2016., pages 485-500,
2016.

\bibitem{M18}
Ethan MacBrough. Cobalt: BFT Governance in Open Networks.
{\it arXiv preprint arXiv:1802.07240}, 2018.

\bibitem{MA06}
Jean-Philippe Martin and Lorenzo Alvisi. Fast byzantine consensus.
{\it IEEE Trans. Dependable Sec. Comput.}, 3(3):202-215, 2006.

\bibitem{MMR14}
Achour Most\'efaoui, Hamouma Moumen, and Michel Raynal.
Signature-free asynchronous byzantine consensus with $T < N /3$ and
$O(N^2)$ messages. {\it In Proceedings of the 2014 ACM Symposium on Principles of Distributed Computing, PODC '14},
pages 2-9, New York, NY,
USA, 2014. ACM.

\bibitem{MMR15}
Achour Mostéfaoui, Hamouma Moumen, and Michel Raynal.
Signature-Free Asynchronous Binary Byzantine Consensus with t < n/3, O(n2) Messages,
and O(1) Expected Time. {\it J. ACM 62, 4}. Article 31. 2015.


\bibitem{MR17}
Achour Most\'efaoui and Michel Raynal. Signature-free asynchronous byzantine systems: from multivalued to binary consensus with
$t < n/3$, $O(n^2)$ messages, and constant time. {\it Acta Informatica, 2017}.
Accepted: 19 April 2016

\bibitem{MRT00}
Achour Most\'efaoui, Michel Raynal, and Fr\'ed\'eric Tronel. From binary
consensus to multivalued consensus in asynchronous message-passing
systems. {\it Inf. Process. Lett., 73(5-6):207-212}, March 2000.

\bibitem{NCV05}
N. F. Neves, M. Correia, and P. Verissimo. Solving vector consensus
with a wormhole. {\it IEEE Trans. on Parallel and Distributed Systems},
16(2):1120-1131, 2005.

\bibitem{PCR14}
Arpita Patra, Ashish Choudhury, and C. Pandu Rangan. Asynchronous
byzantine agreement with optimal resilience. {\it Distributed Computing},
27(2):111-146, 2014.

\bibitem{PSL80}
M. Pease, R. Shostak, and L. Lamport. Reaching agreement in the
presence of faults. {\it J. ACM}, 27(2):228-234, April 1980

\bibitem{R83}
Michael O. Rabin. Randomized byzantine generals. {\it In Proceedings of
the 24th Annual Symposium on Foundations of Computer Science, SFCS
'83}, pages 403-409, 1983.

% \bibitem{R98}
% T. Rabin. A simplified approach to threshold and proactive RSA. {\it In H. Krawczyk, editor,
% Advances in Cryptology: CRYPTO '98}, volume 1462 of Lecture Notes in Computer Science.
% Springer, 1998

\bibitem{ST87}
Srikanth T.K. and Toueg S., Simulating authenticated broadcasts to derive simple fault-tolerant algorithms.
{\it Distributed Computing}, 2:80-94, 1987.


\bibitem{T84}
Sam Toueg. Randomized byzantine agreements. {\it In Proceedings of the
Third Annual ACM Symposium on Principles of Distributed Computing,
PODC '84}, pages 163-178, 1984.

\bibitem{TC84}
  Russell Turpin and Brian A. Coan. Extending binary byzantine agreement to multivalued byzantine agreement.
  {\it Inf. Process. Lett.}, 18(2):73-
76, 1984.

\bibitem{ZC09}
Jialin Zhang and Wei Chen. Bounded cost algorithms for multivalued
consensus using binary consensus instances. {\it Information Processing
Letters}, 109(17):1005-1009, 2009.


}
\end{thebibliography}
\end{document}